\pgfplotsset{width=8.6cm, compat=1.17} 
\def\expandafter\UrlBreaks\expandafter{\UrlBreaks
  \do\a\do\b\do\c\do\d\do\e\do\f\do\g\do\h\do\i\do\j%
  \do\k\do\l\do\m\do\n\do\o\do\p\do\q\do\r\do\s\do\t%
  \do\u\do\v\do\w\do\x\do\y\do\z\do\A\do\B\do\C\do\D%
  \do\E\do\F\do\G\do\H\do\I\do\J\do\K\do\L\do\M\do\N%
  \do\O\do\P\do\Q\do\R\do\S\do\T\do\U\do\V\do\W\do\X%
  \do\Y\do\Z\do\0\do\1\do\2\do\3\do\4\do\5\do\6\do\7\do\8\do\9} 
\newcommand{\md}{\mathrm{d}}
\begin{document}

\title{\Large \textbf{Estimating quantum relative entropies on quantum computers}}

\author[1,2]{Yuchen Lu}
\author[1]{Kun Fang \thanks{kunfang@cuhk.edu.cn}}

\affil[1]{\small School of Data Science, The Chinese University of Hong Kong, Shenzhen,\protect\\  Guangdong, 518172, China}

\affil[2]{\small Ecole Polytechnique Fédéderale de Lausanne (EPFL),\protect\\ CH-1015 Lausanne, Switzerland}

\maketitle

\begin{abstract}
Quantum relative entropy, a quantum generalization of the renowned Kullback-Leibler divergence, serves as a fundamental measure of the distinguishability between quantum states and plays a pivotal role in quantum information science. Despite its importance, efficiently estimating quantum relative entropy between two quantum states on quantum computers remains a significant challenge. In this work, we propose the first quantum algorithm for directly estimating quantum relative entropy and Petz \Renyi divergence from two unknown quantum states on quantum computers, addressing open problems highlighted in [Phys. Rev. A 109, 032431 (2024)] and [IEEE Trans. Inf. Theory 70, 5653–5680 (2024)]. Notably, the circuit size of our algorithm is at most $2n+1$ with $n$ being the number of qubits in the quantum states and it is directly applicable to distributed scenarios, where quantum states to be compared are hosted on cross-platform quantum computers. We prove that our loss function is operator-convex, ensuring that any local minimum is also a global minimum.
We validate the effectiveness of our method through numerical experiments and observe the absence of the barren plateau phenomenon. As an application, we employ our algorithm to investigate the superadditivity of quantum channel capacity. Numerical simulations reveal new examples of qubit channels exhibiting strict superadditivity of coherent information, highlighting the potential of quantum machine learning to address quantum-native problems.
\end{abstract}

\tableofcontents

\section{Introduction}

Relative entropy, also known as the Kullback-Leibler (KL) divergence~\cite{kullback1951information}, quantifies the difference between a model probability distribution and the true distribution. It is a fundamental concept in classical information theory and has widespread applications across various fields. For example, KL divergence serves as a standard loss function in machine learning~\cite{murphy2012machine}, including in Boltzmann Machines~\cite{ackley1985learning}, which were recognized with the 2024 Nobel Prize in Physics for their foundational impact. Additionally, KL divergence is used to assess the efficiency of data compression schemes~\cite{cover1999elements,gray2011entropy} and to measure deviations from thermodynamic equilibrium in statistical mechanics~\cite{jaynes1957information,crooks1999entropy}.

Building on advances in quantum information science, several quantum generalizations of the KL divergence have been developed to quantify the distinguishability between quantum states \cite{umegaki1962conditional,belavkin1982c,donald1986relative,hiai1991proper}. Notably, the formulations introduced by Umegaki~\cite{umegaki1962conditional} and Petz~\cite{petz1986quasi} are particularly important due to their operational relevance in a wide range of quantum information processing tasks~\cite{hiai1991proper,mosonyi2011quantum}. These measures are widely used in quantum machine learning~\cite{biamonte2017quantum}, quantum channel coding~\cite{hayashi2016quantum}, quantum error correction~\cite{cerf1997information}, quantum resource theories~\cite{chitambar2019quantum}, and quantum cryptography~\cite{pirandola2020advances}.

Despite its broad utility, efficiently estimating quantum relative entropy remains a significant challenge. On classical computers, calculating the quantum relative entropy between two unknown quantum states requires explicit matrix representations of the density operators. However, reconstructing these matrices via quantum state tomography demands an exponential number of measurements with respect to the system dimension~\cite{paris2004quantum,kliesch2021theory}. Even after obtaining the density matrices, classical computation of quantum relative entropy involves manipulating exponentially large matrices, resulting in prohibitive computational costs. In line with Feynman's vision for quantum simulation~\cite{Feynman}, directly estimating quantum relative entropy on quantum computers—without reconstructing the density matrices—offers a promising and potentially more efficient alternative. Nevertheless, the nonlinear nature of quantum relative entropy poses substantial challenges for such an approach.

\subsection{Main contributions}

In this work, we propose the first quantum algorithm for directly estimating quantum relative entropy and Petz \Renyi divergence between two unknown quantum states on quantum computers, addressing open problems highlighted by Goldfeld et al.~\cite{goldfeld2024quantum} and Wang et al.~\cite{wang2024new}. Our approach employs quadrature approximations of relative entropies, the variational representation of quantum $f$-divergences, and a novel technique for parameterizing Hermitian polynomial operators to estimate their traces with quantum states. An overview is provided in Figure~\ref{fig:illustrative_Fig}. Specifically, we develop a variational quantum algorithm (VQA) to evaluate quantum $f$-divergences within a classical-quantum hybrid framework: probability distributions are sampled on quantum computers using quantum inputs, followed by classical post-processing. The evaluation of quantum $f$-divergences serves as a key subroutine for estimating quantum relative entropy and Petz \Renyi divergence via quadrature approximation, efficiently expressing these quantities as weighted averages of quantum $f$-divergences.

\begin{figure}[H]
    \centering
    \includegraphics[width=\linewidth]{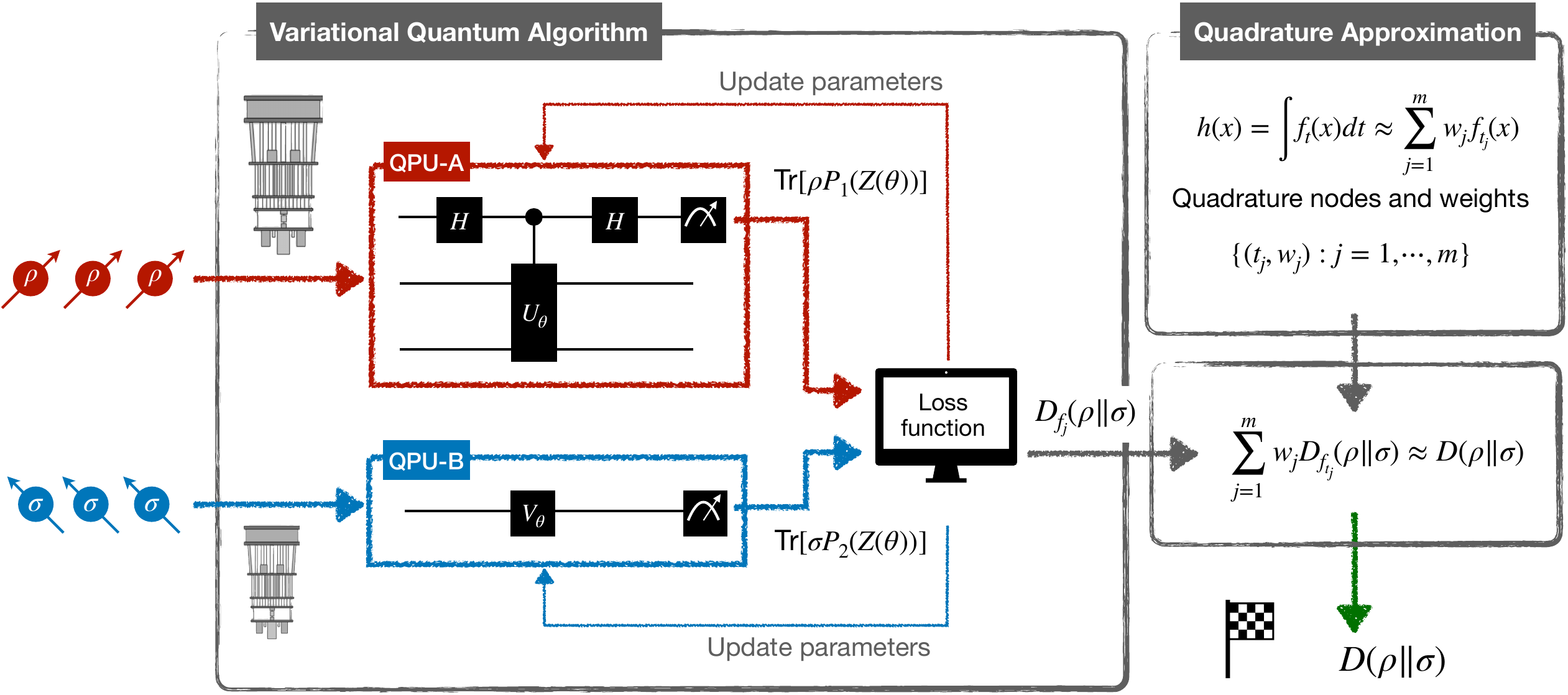}
    \caption{Estimating quantum relative entropy using a variational quantum algorithm. Identical copies of the quantum states, possibly hosted on different quantum computers, are processed through parameterized quantum circuits to extract relevant information. The variational algorithm estimates quantum $f$-divergences, and the quantum relative entropy is computed as a weighted average of these $f$-divergences, determined by the quadrature nodes and weights. The estimation of Petz \Renyi divergence follows a similar procedure.}
    \label{fig:illustrative_Fig}
\end{figure}

Notably, the circuit size of our algorithm is at most $2n+1$, where $n$ is the number of qubits in the quantum states. It is also directly applicable to distributed scenarios, enabling the comparison of quantum states hosted on cross-platform quantum computers. This feature is particularly relevant for cross-platform quantum verification, a topic of growing interest in recent studies~\cite{knorzer2023cross, greganti2021cross, zheng2024cross, elben2020cross, Qian2024prl}.
This advantage stems from the variational representation of $f$-divergence, which decouples the contributions of $\rho$ and $\sigma$, enabling independent estimation of terms involving $\rho$ and $\sigma$ on separate quantum devices, with the results subsequently combined to compute their relative entropies. 

We prove that our loss function is operator-convex, ensuring a well-behaved optimization landscape and the existence of efficient solutions. Numerical simulations further validate our approach, demonstrating that the parameterized quantum circuits remain trainable using simple optimizers such as gradient descent. By employing an adaptive learning rate strategy that reduces the step size in response to loss fluctuations, we achieve reliable convergence and obtain estimates with relative errors typically on the order of a few percent ($\approx 2\%$). Additionally, our numerical experiments indicate that the proposed loss function avoids the barren plateau phenomenon—a common challenge in VQAs where gradients vanish exponentially with increasing qubit count and circuit expressibility~\cite{mcclean2018barren,holmes2022connecting}. 

As an application, we utilize our algorithm to explore the superadditivity of quantum channel capacity—a phenomenon where the combined use of quantum channels exceeds the sum of their individual capacities. Specifically, we propose a variational approach to compute the quantum channel coherent information, employing our algorithm for quantum relative entropy as a subroutine. Numerical simulations uncover new examples of qubit channels exhibiting strict superadditivity of coherent information. Compared to methods based on classical neural networks and optimizations~\cite{leditzky2018dephrasure,bausch2020quantum, sidhardh2022exploring}, our approach leverages quantum systems to detect superadditivity. This demonstrates the potential of quantum machine learning to tackle quantum-native problems and paving the way for advancing quantum science through quantum computation.

\subsection{Related works}

Estimating quantum entropies and divergences is a central problem in quantum information science, and has attracted significant attention in recent years. Table~\ref{tab:related_works} provides an overview of related works and summarizes their main approaches.

\setlength\extrarowheight{2pt}
\begin{table}[ht]
\centering
\small
\begin{tabular}{c|c|c|c}
\toprule[2pt]
\textbf{Quantities} & \textbf{QSVT-based} & \textbf{VQA} & \textbf{Others} \\ \hline
 Von Neumann entropy              &\cite{gilyen2019distributional,wang2024new}          &      \cite{shin2024estimating,goldfeld2024quantum} & \cite{acharya2020estimating, wang2023quantum}  \\ \hline
Quantum ($\alpha$-)\Renyi entropy                     & \cite{subramanian2021quantum,wang2024new}          &   \cite{goldfeld2024quantum}  & \cite{acharya2020estimating, wang2023quantum} \\ \hline
Quantum ($\alpha$-)Tsallis entropy                    & \cite{wang2024new}           &   $\backslash$  & $\backslash$  \\ \hline
  ($\alpha$-)trace distance                       & \cite{wang2024new}           &      \cite{chen2021variational}  &\cite{buadescu2019quantum}  \\ \hline
  ($\alpha$-)fidelity                      &\cite{wang2024new}           &    \cite{cerezo2020variational,chen2021variational}  & \cite{buadescu2019quantum}   \\ \hline
\rowcolor{White!50!lightgray}  Quantum relative entropy                      & $\backslash$           &           this work   & $\backslash$  \\ \hline
 \rowcolor{White!50!lightgray} Petz \Renyi divergence                       & $\backslash$           &             this work  & $\backslash$  \\ 
\bottomrule[2pt] 
\end{tabular}
\caption{Related works of estimation methods for different quantum quantities.}
\label{tab:related_works}
\end{table}

The first class of algorithms involves the ``purified quantum query access'', a widely adopted input model based on the general framework of quantum singular value transformation (QSVT) \cite{gilyen2019quantum}. In this model, mixed quantum states are represented through quantum oracles that prepare their purifications. For instance,~\cite{gilyen2019distributional} introduced a quantum algorithm for computing von Neumann entropy with query complexity $O(d/\varepsilon^{1.5})$. Additionally,~\cite{subramanian2021quantum} proposed a method of computing the quantum \Renyi entropy with query complexity $O\left(\frac{\kappa}{(x\varepsilon)^2}\log\left(\frac{d}{\varepsilon}\right)\right)$, where $\kappa > 0$ satisfies $I/\kappa \le \rho \le I$ and $x = \tr[\rho^\alpha]/d$. Expanding on this framework,~\cite{wang2024new} provided a series of new quantum algorithms that can compute quantum entropies and quantum divergences such as the quantum \Renyi entropy, quantum Tsallis entropy, trace distance, and $\alpha$-fidelity for $\alpha \in (0,1)$. While the quantum relative entropy can be approached by the Sandwiched \Renyi divergence in principle as $\alpha\to 1$, the latter of which can be computed from $\alpha$-fidelity using the algorithm provided in~\cite{wang2024new}. However, the query complexity of this approach is exponential in $1/(1-\alpha)$, which blows up as $\alpha \to 1$.

The second class of algorithms involves estimating quantum entropies using VQAs, which have proven effective for tackling problems that are challenging to solve exactly~\cite{cerezo2021variational}. For instance, ~\cite{cerezo2020variational} introduced a hybrid classical-quantum algorithm to estimate truncated fidelity. Similarly, ~\cite{chen2021variational} proposed VQAs for estimating trace distance and fidelity by reformulating these tasks as optimization problems over unitary operators. Building on this, ~\cite{shin2024estimating} developed a variational method for estimating von Neumann entropy by expressing its variational formula as an optimization problem over parameterized quantum states. Additionally, ~\cite{goldfeld2024quantum} presented a suite of VQAs for estimating von Neumann and \Renyi entropies, as well as measured relative entropy and measured \Renyi relative entropy, by parameterizing Hermitian operators using parameterized quantum circuits and classical neural networks.

Besides, \cite{buadescu2019quantum} provided two algorithms for testing the closeness of quantum states with respect to the fidelity and trace distance. 
\cite{acharya2020estimating} introduced a method to compute the von Neumann entropy and quantum \Renyi entropies of an $d$-dimensional quantum state with $O(d^2/\varepsilon^2)$ and $O(d^{2/\alpha}/\varepsilon^{2/\alpha})$ copies, respectively. \cite{wang2023quantum} exploited truncated Taylor series to estimate von Neumann entropy and quantum \Renyi entropies and showed that the corresponding quantum circuits can be efficiently constructed with single and two-qubit quantum gates. 
 
Despite these advancements, all current methods fail to estimate quantum relative entropy and Petz \Renyi divergences from unknown quantum states due to various inherent limitations.

\subsection{Organization}

The remainder of this paper is organized as follows. Section~\ref{sec:pre} introduces the variational expressions for quantum relative entropy and Petz \Renyi divergence. Section~\ref{sec:polynomial_Z} presents a technique for parameterizing Hermitian polynomial operators, enabling the estimation of their traces with quantum states. Building on these foundations, Section~\ref{sec:VQAs} details variational quantum algorithms for quantum $f$-divergence, quantum relative entropy, and Petz \Renyi divergence. Section~\ref{sec:converge_analysis} analyzes the convergence properties and trainability of our approach. Section~\ref{sec:error_analysis} provides an error analysis of the proposed algorithms. Section~\ref{sec:numerical_simu} reports the results of numerical simulations. Section~\ref{sec:applications} demonstrates an application of our method to quantum channel capacity. Finally, Section~\ref{sec:conclusion} concludes with a summary and discusses open questions for future research.

\section{Preliminaries}
\label{sec:pre}

\subsection{Notations}

In this section, we introduce the notations and define key quantities used throughout this work. Table~\ref{tab:notations} summarizes the most frequently used symbols. A quantum state refers to a positive semidefinite operator with unit trace. We denote quantum states by $\rho$ and $\sigma$, and their supports by $\supp(\rho)$ and $\supp(\sigma)$, respectively. The logarithm with base 2 is written as \(\log\), while the natural logarithm (base \(e\)) is denoted by \(\ln\). The set of integers \(\{1, 2, \dots, p\}\) is represented by \([p]\). Estimated values are indicated by a hat, e.g., \(\hat{x}\).

\setlength\extrarowheight{3pt}
\begin{table}[ht]
\centering
\begin{tabular}{l|l}
\toprule[2pt]
\multicolumn{1}{c|}{\textbf{Notations}} & \multicolumn{1}{c}{\textbf{Descriptions}} \\ \hline
$D(\rho\|\sigma)$          & Quantum relative entropy          \\ \hline
$D_\alpha(\rho\|\sigma)$   & Petz  \Renyi divergence with $\alpha\in (0,1)\cup(1,2]$ \\ \hline
$D_f(\rho\|\sigma)$        & Standard quantum $f$-divergence           \\ \hline
$Q_\alpha(\rho\|\sigma)$   & Quasi-relative entropy $\tr[\rho^\alpha \sigma^{1-\alpha}]$  \\ \hline
$\hat{x}$                  & Estimate of a quantity $x$, such as $\hat{D}$, $\hat{D}_\alpha$, $\hat{D}_f$, $\hat{Q}_\alpha$            \\ \hline
$f_t(x)$       &  Function $(x-1)/(t(x-1) + 1)$ with $t \in [0,1]$ \\ \hline
$r_m(x)$       &  Quadrature approximation for $\log(x)$ with $m$ quadrature nodes  \\ \hline
$h_m(x)$       &  Quadrature approximation for $x^{1-\alpha}$ with $m$ quadrature nodes  \\
\bottomrule[2pt] 
\end{tabular}
\caption{Notational conventions of divergences and quadrature approximation functions.}
\label{tab:notations}
\end{table}

\subsection{Variational expression of quantum relative entropies} 
\label{sec:vatiational_form}

In this section, we introduce the standard quantum $f$-divergence and present its variational expression for a specific class of $f_t$-divergences. We then derive variational forms for the quantum relative entropy and Petz \Renyi divergence, utilizing the Gauss-Radau-Jacobi (GRJ) quadrature approximation. The GRJ quadrature method efficiently approximates integrals with weight functions of the form $(1-t)^a(1+t)^b$ by discretizing the integral into a weighted sum over carefully selected quadrature nodes. By exploiting the linearity of the standard quantum $f$-divergence in $f$, we reformulate the integral representations of quantum relative entropy and Petz \Renyi divergence as sums of $f_t$-divergences. This approach enables these quantities to be efficiently estimated within the variational framework.

\subsubsection{Standard quantum $f$-divergence}

The definition of standard quantum $f$-divergence is as follows~\cite{petz1986quasi,hiai2017different}:
\begin{definition}[Standard quantum $f$-divergence]
    Let $\rho$, $\sigma$ be positive semidefinite operators on a finite-dimensional Hilbert space $\cH$ with spectral decompositions $\rho = \sum_{j}\eta_j P_j$ and $\sigma = \sum_{k}\mu_k Q_k$. The standard quantum $f$-divergence is defined by
    \begin{align}
        D_f(\rho\|\sigma) := \sum_{j:\eta_j > 0}\sum_{k:\mu_k>0} \eta_j f(\mu_k \eta_j^{-1}) \tr[P_jQ_k] + f(0^+)\tr[\rho(I-\sigma^0)],
        \label{eq:def:D_f}
    \end{align}
    where $\sigma^0$ denotes the projector onto the support of $\sigma$ and $f(0^+)$ denotes the right-hand limit of $f(x)$ as $x\to \infty$.
    \label{def:f_div}
\end{definition}
Note that for quantum states $\rho$ and $\sigma$ that satisfy $\supp(\rho)\subseteq \supp(\sigma)$, the second term in Eq.~\eqref{eq:def:D_f} vanishes since $\tr[\rho(I - \sigma^0)] = 0$. When we set the function $f$ in the definition above as
\begin{align}
    f_t(x) := \frac{x-1}{t(x-1) + 1}~~\text{for }t\in [0,1],
    \label{def:f_t(x)}
\end{align}
its corresponding standard divergence $D_{f_t}(\rho\|\sigma)$ has a variational expression.
\begin{lemma}[\cite{brown2024device}]
Let $\rho$ and $\sigma$ be positive semidefinite operators on a finite-dimensional Hilbert space $\cH$ and $t\in (0,1]$. Then
    \begin{align}
        D_{f_t}(\rho\|\sigma) = &\inf_{Z} \frac{1}{t}\left\{ \tr[\rho] + \tr[\rho(Z+Z^\dagger)] + (1-t)\tr[\rho Z^\dagger Z] + t\tr[\sigma Z Z^\dagger] \right\},
        \label{eq:variant_Df}
    \end{align}
    where the infimum on the right-hand side is taken over all bounded linear operators $Z$ on $\cH$.
    \label{lemma:variational_ft}
\end{lemma}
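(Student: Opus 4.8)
The plan is to reduce the matrix optimisation on the right-hand side to a family of decoupled scalar problems by working simultaneously in the eigenbases of $\rho$ and $\sigma$. Write $\rho=\sum_a\eta_a\ketbra{a}{a}$ and $\sigma=\sum_b\mu_b\ketbra{b}{b}$ for orthonormal eigenbases $\{\ket{a}\}$ and $\{\ket{b}\}$, and parameterise $Z$ by its mixed matrix elements $z_{ba}:=\bra{b}Z\ket{a}$. Denote the bracketed functional by
\begin{align}
g(Z):=\tr[\rho]+\tr[\rho(Z+Z^\dagger)]+(1-t)\tr[\rho Z^\dagger Z]+t\,\tr[\sigma ZZ^\dagger].
\end{align}

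First I would expand each trace in these coordinates. Using $\tr[\rho Z]=\sum_{a,b}\eta_a\braket{a}{b}\,z_{ba}$, together with $\tr[\rho Z^\dagger Z]=\sum_{a,b}\eta_a|z_{ba}|^2$ and $\tr[\sigma ZZ^\dagger]=\sum_{a,b}\mu_b|z_{ba}|^2$, the functional separates completely across index pairs,
\begin{align}
g(Z)=\tr[\rho]+\sum_{a,b}\Big(2\,\mathrm{Re}\big(\eta_a\braket{a}{b}\,z_{ba}\big)+\big[(1-t)\eta_a+t\mu_b\big]\,|z_{ba}|^2\Big).
\end{align}
The key simplification is that no cross terms between distinct $(a,b)$ survive, because $Z^\dagger Z$ and $ZZ^\dagger$ are weighted by the diagonal operators $\rho$ and $\sigma$ in their own eigenbases. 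Each summand is thus an independent scalar quadratic in the complex variable $z_{ba}$.

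Next I would minimise each scalar quadratic by completing the square. Writing $\lambda_{ab}:=(1-t)\eta_a+t\mu_b$, for $\lambda_{ab}>0$ the minimiser is $z_{ba}=-\eta_a\overline{\braket{a}{b}}/\lambda_{ab}$ with minimal value $-\eta_a^2|\braket{a}{b}|^2/\lambda_{ab}$, so that $\inf_Z g(Z)=\tr[\rho]-\sum_{a,b}\eta_a^2|\braket{a}{b}|^2/\lambda_{ab}$. Substituting $\tr[\rho]=\sum_{a,b}\eta_a|\braket{a}{b}|^2$ and simplifying the per-pair difference via $\eta_a-\eta_a^2/\lambda_{ab}=t\,\eta_a(\mu_b-\eta_a)/\lambda_{ab}$ yields
\begin{align}
\frac1t\inf_Z g(Z)=\sum_{a,b}\frac{\eta_a(\mu_b-\eta_a)}{t\mu_b+(1-t)\eta_a}\,|\braket{a}{b}|^2.
\end{align}
Grouping eigenvectors into spectral projectors through $\tr[P_jQ_k]=\sum_{a\in j,\,b\in k}|\braket{a}{b}|^2$ then recovers exactly $\sum_{j,k}\eta_j f_t(\mu_k\eta_j^{-1})\tr[P_jQ_k]$, the defining double sum for $D_{f_t}(\rho\|\sigma)$.

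I expect the main obstacle to be the bookkeeping of the degenerate and support cases rather than the algebra. I would treat separately the pairs with $\mu_b=0$: for $t\in(0,1)$ the coefficient $\lambda_{ab}=(1-t)\eta_a$ remains positive, and collecting these terms yields $-\tfrac{1}{1-t}\tr[\rho(I-\sigma^0)]$, which must be matched against the second term of Definition~\ref{def:f_div} by observing that $f_t(0^+)=\lim_{x\to0^+}f_t(x)=-1/(1-t)$; this simultaneously confirms the boundary contribution is the correct one. I would also record the convexity argument legitimising the optimisation: since $t\in[0,1]$ and $\rho,\sigma\succeq0$, the quadratic part $\sum_{a,b}\lambda_{ab}|z_{ba}|^2$ is positive semidefinite, so $g$ is convex and the completed-square value is a genuine global infimum—attained whenever every relevant $\lambda_{ab}>0$ (in particular for $t<1$, or for $t=1$ under $\supp(\rho)\subseteq\supp(\sigma)$), and an unattained infimum equal to $-\infty$ in the remaining degenerate $t=1$ case, consistent with $f_1(0^+)=-\infty$.
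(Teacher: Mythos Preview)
The paper does not actually prove this lemma; it is quoted from \cite{brown2024device} and used as a black box. The only place in the paper where the same functional $g(Z)$ is analysed is Lemma~\ref{lemma:Sylvester}, which establishes operator-convexity via vectorisation, $g(Z)=\tr[\rho]+2\,\mathrm{Re}\braket{\mathrm{vec}(\rho)}{\mathrm{vec}(Z)}+\bra{\mathrm{vec}(Z)}A_t\ket{\mathrm{vec}(Z)}$ with $A_t=(1-t)I\otimes\rho^*+t\,\sigma\otimes I$, and then characterises the minimiser implicitly through the Sylvester equation $(1-t)\rho Z^\dagger+tZ^\dagger\sigma=-\rho$. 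That argument stops short of evaluating the optimal value and matching it to $D_{f_t}(\rho\|\sigma)$.

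Your proof is correct and self-contained. By expanding $Z$ in the mixed basis $z_{ba}=\bra{b}Z\ket{a}$ you diagonalise $A_t$ directly, so the infimum decouples into independent scalar quadratics that you can solve in closed form. This not only reproduces convexity (your $\lambda_{ab}=(1-t)\eta_a+t\mu_b\ge0$ is exactly the spectrum of $A_t$) but also yields the minimum value explicitly, which is what is needed here. Compared to the paper's Sylvester-equation viewpoint, your approach is more elementary and immediately identifies the optimal $Z$ entrywise; the vectorisation route is basis-free and makes the operator-convexity transparent, but would still require spectrally decomposing $A_t$ (equivalently, solving the Sylvester equation in the eigenbases) to reach the quantitative identity you establish. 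Your handling of the boundary cases ($\mu_b=0$ contributing the $f_t(0^+)\tr[\rho(I-\sigma^0)]$ term, and the $t=1$ degeneracy) is also correct and more explicit than anything in the paper.
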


It is important to note that the variational expression above is not applicable at $t=0$. In this boundary case, the value can be directly computed using the definition of $f_0$-divergence (Eq.~\eqref{eq:def:D_f}), which yields $D_{f_0}(\rho\|\sigma) = \tr[\rho^0 \sigma] - 1$. For the remainder of this paper, we assume $\supp(\rho) = \supp(\sigma)$. In this case, we have $D_{f_0}(\rho\|\sigma) = 0$.

\subsubsection{Quantum relative entropy}

Let $\rho$ and $\sigma$ be two quantum states. Their quantum relative entropy is defined as~\cite{umegaki1962conditional}
\begin{align}
    D(\rho\|\sigma) := \tr[\rho(\log\rho - \log\sigma)],
\end{align}
if $\supp(\rho) \subseteq \supp(\sigma)$ and  $+ \infty$ otherwise. In particular, let $f(x) = -\log(x)$. The corresponding standard quantum $f$-divergence simplifies to the quantum relative entropy,
\begin{align}\label{eq: relative entropy log}
D_{-\log}(\rho\|\sigma)  = D(\rho\|\sigma).
\end{align}
With the integral representation of logarithm, $\ln(x) = \int_{0}^1 f_t(x) \md t$, we can approximate the logarithm using GRJ quadrature as follows~\cite[Theorem 3.8]{brown2024device},
\begin{align}
    \log(x)\approx r_m(x) :=\frac{1}{\ln 2}\sum_{j=1}^m w_j f_{t_j}(x),
    \label{eq:gaussian_qua}
\end{align}
where $\{t_j\}_{j=1}^m$ is the set of GRJ quadrature nodes with a fixed node $t_1 = 0$ or $t_m = 1$, and $\{w_j\}_{j=1}^m$ are corresponding weights. The quadrature approximation for $\log x$ is shown in Figure~\ref{fig:quadrature_approx} (a), which converges to the exact value as the number of quadrature nodes $m$ increases. Combining Eqs.~\eqref{eq: relative entropy log} and~\eqref{eq:gaussian_qua}, we obtain the approximation of quantum relative entropy as follows:
\begin{align}
    D(\rho\|\sigma) 
    \approx D_{-r_m}(\rho\|\sigma) = -\sum_{j=1}^m \frac{w_j}{\ln2} D_{f_{t_j}}(\rho\|\sigma) \label{eq:approx_rel_entr}
\end{align}
where each $D_{f_{t_j}}(\rho\|\sigma)$ on the right-hand side can be evaluated using the variational expression provided in Lemma~\ref{lemma:variational_ft}.

\begin{figure}[H]
    \centering
    \includegraphics[width=0.9\linewidth]{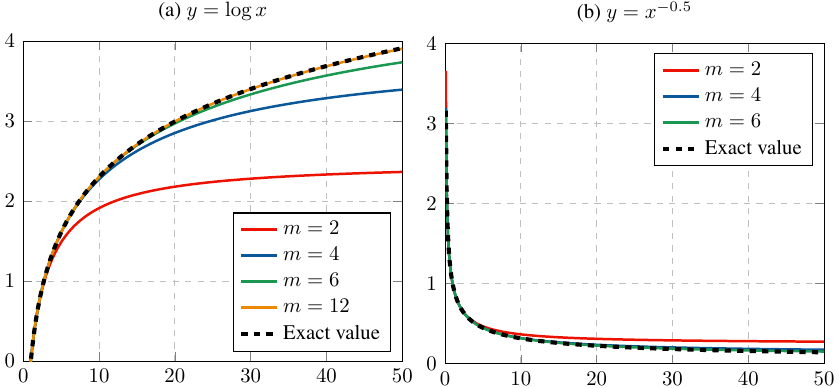}
    \caption{Performance of GRJ quadrature approximation. (a) Quadrature approximation for the function $y = \log x$; (b) Quadrature approximation for the function $y = x^{-0.5}$. The approximation converges to the exact value as the number of quadrature nodes $m$ increases.}
    \label{fig:quadrature_approx}
\end{figure}

\subsubsection{Petz \Renyi divergence} 

Let $\rho$ and $\sigma$ be two quantum states. The Petz \Renyi divergence is defined as~\cite{petz1986quasi}
\begin{align}
    D_\alpha (\rho\|\sigma) := \frac{1}{\alpha - 1}\log Q_\alpha(\rho\|\sigma),
\end{align}
if $\supp(\rho) \subseteq \supp(\sigma)$ and $+\infty$ otherwise. Here $Q_\alpha(\rho\|\sigma) := \tr[\rho^\alpha \sigma^{1-\alpha}]$ is the quasi-relative entropy, and we focus on $\alpha \in (0,1)\cup (1,2]$ which is the range of particular interest. Let $f(x) = x^{1-\alpha}$. The corresponding standard quantum $f$-divergence reduces to
\begin{align}
    D_{x^{1-\alpha}}(\rho\|\sigma) = Q_{\alpha}(\rho\|\sigma). \label{eq: power function f divergence}
\end{align} 
To implement the quadrature approximation, an integral representation for the power function is required, which is provided by the L\"{o}wner's Theorem~\cite{lowner1934monotone}, for $\alpha \in (0,1)\cup (1,2)$,
\begin{align}
    x^{1-\alpha} - 1 = \frac{\sin(\alpha \pi)}{\pi} \int_0^1 f_t(x) t^{\alpha - 1}(1-t)^{1-\alpha} \md t.\label{eq: power function integral}
\end{align}
The integral on the right-hand side can be approximated via GRJ quadratures by~\cite[Theorem 1]{faust2023rational} or~\cite[Proposition 3.12]{hahn2024bounds},
\begin{align}
        \sum_{j=1}^m w_j^1 f_{t_j^1}(x) \leq \int_0^1 f_t(x) & t^{\alpha - 1}(1-t)^{1-\alpha} \md t \leq \sum_{j=1}^m w_j^0 f_{ t_j^0}(x).
        \label{eq:quadrature_Petz_a}
    \end{align}
Here, $\{t_j^0\}_{j=1}^m$ and $\{w_j^0\}_{j=1}^m$ denote the nodes and weights of the GRJ quadrature with a fixed node at $t_1^0 = 0$, while $\{t_j^1\}_{j=1}^m$ and $\{w_j^1\}_{j=1}^m$ correspond to the nodes and weights with a fixed node at $t_m^1 = 1$. As $m \to \infty$, both the lower and upper bounds converge to the exact value of the integral. Thus, the integral in Eq.~\eqref{eq:quadrature_Petz_a} can be approximated using either bound. For simplicity, we omit the superscripts in the following discussion.

Combining Eqs.~\eqref{eq: power function integral} and~\eqref{eq:quadrature_Petz_a}, we have
\begin{align}
    x^{1-\alpha} &\approx h_m(x) := 1 + \frac{\sin(\alpha \pi)}{\pi} \sum_{j=1}^m w_j f_{t_j}(x).
\end{align}
The performance of this quadrature approximation is shown in Figure~\ref{fig:quadrature_approx} (b), where we take $\alpha = 1.5$ as an example. As given in Eq.~\eqref{eq: power function f divergence}, we have the approximation to $Q_\alpha(\rho\|\sigma)$ by
\begin{align}
    &Q_\alpha(\rho\|\sigma) \approx D_{h_m}(\rho\|\sigma) = 1 + \frac{\sin(\alpha \pi)}{\pi}\sum_{j=1}^m w_j D_{f_{t_j}}(\rho\|\sigma),
    \label{eq:petz_alpha_vari}
\end{align}
where each $D_{f_{t_j}}(\rho\|\sigma)$ on the right-hand side can be evaluated using the variational expression provided in Lemma~\ref{lemma:variational_ft}.

In the case of $\alpha = 2$, we can consider $t=1$ and 
the standard quantum $f$-divergence gives~\cite[Remark 3.6]{brown2024device} the following relation,
\begin{align}
    &= 1 - D_{f_1}(\rho\|\sigma) ~~\text{and}~~ D_2(\rho\|\sigma) = \log Q_2(\rho\|\sigma).
    \label{eq:approx_tr_petz2}
\end{align}

\section{Parameterization of Hermitian polynomial operators}
\label{sec:polynomial_Z}

The variational expression introduced in the previous section is a special case of the following general optimization problem:
\begin{align}
    \inf_{\{Z_1,\dots,Z_N\}} \left\{\tr[\rho\, P(Z_1, \dots, Z_N)] + \tr[\sigma\, Q(Z_1, \dots, Z_N)]\right\},
    \label{eq:poly_Z}
\end{align}
where $\{Z_1,Z_2,\dots,Z_N\}$ is a set of linear operators, and $P$ and $Q$ are Hermitian polynomials satisfying $P^\dagger = P$ and $Q^\dagger = Q$. In this section, we present a method for parameterizing the polynomial observables $P$ and $Q$, along with a sampling procedure for estimating $\tr[\rho P]$ and $\tr[\sigma Q]$ that is well-suited for implementation on quantum computers.

Our method is based on the singular value decomposition and the extended SWAP test~\cite{kieferova2021quantum}. Without loss of generality, we consider the $p$-th order term in the polynomial, which takes the form
\begin{align}
    \sum_{\bm{i}\in [N]^p}a_{\bm{i}}\left(\cZ_{\bm{i}} + \mathcal{Z}_{\bm{i}}^\dagger \right),~~\text{where}~~a_{\bm{i}}\in \mathbb{R}, ~\cZ_{\bm{i}} = \prod_{k=1}^p Z_{i_k},
    \label{eq:def_cZ_gamma}
\end{align}
where $\bm{i} := (i_1, i_2, \dots,i_p)\in [N]^p$ is an index-sequence of length $p$. For possible terms such as $Z_1Z_1^\dagger$, it can be included by considering a polynomial $P(Z_1,Z_1^\dagger,\dots)$. To parameterize the polynomial $P(Z_1, \dots,Z_N)$ and estimate the value $\tr[\rho P]$, it suffices to parameterize the linear operator $\cZ_{\bm{i}}$ and then estimate the value $\tr[\rho (\cZ_{\bm{i}}+\cZ_{\bm{i}}^\dagger)]$. Detailed procedures are given as follows.

\vspace{0.3cm}
\noindent
\textbf{Parameterization. }With singular value decomposition, a linear operator can be parameterized by
\begin{align}
    Z_{i_k} = \sum_{j_k = 1}^d \lambda_{k,j_k} U_{i_k}\ket{j_k}\bra{j_k}V_{i_k},
\end{align}
where $\lambda_{k,j_k}\ge 0$ and $\{\ket{1},\ket{2},\dots,\ket{d}\}$ denotes a set of computational basis. Then we have
\begin{align}
    \cZ_{\bm{i}}
    &= \sum_{(j_1,\dots, j_p)\in[d]^p}\prod_{k=1}^p \lambda_{k,j_k} U_{i_k}\ket{j_k}\bra{j_k}V_{i_k}.
    \label{eq:para_poly_Z}
\end{align}
We parameterize the unitaries $U_{i_k}$ and $V_{i_k}$ with a set of parameterized quantum circuits $U_{i_k}(\bm{\theta}_{i_k})$ and $V_{i_k}(\bm{\beta}_{i_k})$, where the circuit structures are tailored to the specific problems being addressed and the available quantum hardware. The dimensions of vectors $\bm{\theta}_{i_k}$ and $\bm{\beta}_{i_k}$ depend on the structure of the parameterized quantum circuits.

When the linear operators $Z_i$ are bounded, Eq.~\eqref{eq:poly_Z} transforms into a constrained optimization problem, expressed as $0\le \|Z_i\|\le M_i$ for all $i$, where $\|\cdot\|$ denotes a suitable operator norm and $M_i$ are given constants. This optimization can be efficiently handled by imposing constraints on the parameters $\lambda_{k,j_k}$ during data processing on classical computers. Moreover, an alternative approach for parameterizing a linear operator, which involves incorporating classical neural networks (CNNs) as used in~\cite{goldfeld2024quantum}, is detailed in Appendix~\ref{sec:CNN}.

\vspace{0.3cm}
\noindent
\textbf{Sampling.} Denote $\bm{j} := (j_1, j_2 \dots, j_p)\in [d]^p$, $a_{\bm{j}} := \prod_{k=1}^p \lambda_{k,j_k}$, and
\begin{align}
    \Theta_{\bm{j},\bm{i}} := \mathrm{Re} \left[\tr\left(\rho\prod_{k=1}^p U_{i_k}\ket{j_k}\bra{j_k}V_{i_k}\right) \right].
    \label{eq:remark2_need1}
\end{align}
Due to Eq.~\eqref{eq:para_poly_Z}, after parameterization, the quantity we need to estimate becomes
\begin{align}
\tr[\rho(\cZ_{\bm{i}} + \cZ_{\bm{i}}^\dagger)] &= 2\mathrm{Re}\tr[\rho \cZ_{\bm{i}}] = 2\sum_{\bm{j} \in [d]^p} a_{\bm{j}}\Theta_{\bm{j},\bm{i}}.
\end{align}
We can estimate the quantity above based on the extended SWAP test provided in~\cite[Theorem 2.3]{kieferova2021quantum}, as adapted in the following lemma.

\begin{lemma}
\label{the:extended_SWAP}
Let $\rho$ be a quantum state of dimension $d$, and $\cZ_{\bm{i}}$ be the linear operator defined in Eq.~\eqref{eq:def_cZ_gamma}, the corresponding parameterization of which is Eq.~\eqref{eq:para_poly_Z}. Denote $\bm{j} = (j_1,j_2,\dots,j_p)\in [d]^p$ and $\ket{\bm{j}} = \ket{j_1}\ket{j_2}\cdots \ket{j_p}$. Then the unitary circuit $\chi_{\bm{i}}$ shown in Figure~\ref{fig:general_chi} gives
\begin{align}
    \Theta_{\bm{j},\bm{i}} = 2\tr\left[\left(\ket{0}\bra{0}\ox I\right)\chi_{\bm{i}}\left(\ket{0}\bra{0}\ox \rho\ox \ket{\bm{j}}\bra{\bm{j}}\right)\chi_{\bm{i}}^\dagger\right] - 1. 
\end{align}
\end{lemma}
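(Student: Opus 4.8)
The plan is to read $\chi_{\bm{i}}$ as a Hadamard-test (extended SWAP test) circuit and to split the claimed identity into two independent ingredients. Writing $\tau := \rho\ox\ketbra{\bm{j}}{\bm{j}}$ for the input on the work registers and $W_{\bm{i}}$ for the unitary that the ancilla controls in Figure~\ref{fig:general_chi}, the target reduces to (i) the standard fact that a Hadamard test with a controlled unitary reads out the real part of a trace, and (ii) a permutation-operator identity that collapses $\tau$ against $W_{\bm{i}}$ into the ordered operator product appearing in the definition of $\Theta_{\bm{j},\bm{i}}$. Once $\mathrm{Re}\tr[\tau W_{\bm{i}}] = \Theta_{\bm{j},\bm{i}}$ is established, the measurement-probability formula produces the stated expression directly.

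First I would record the Hadamard-test identity. The circuit prepares $\ketbra{0}{0}\ox\tau$, applies a Hadamard to the ancilla, then the controlled-$W_{\bm{i}}$, and a final Hadamard, before projecting the ancilla onto $\ket{0}$. A direct computation (expanding $\ket{+}\bra{+}$ and using $\tr[\tau]=\tr[W_{\bm{i}}\tau W_{\bm{i}}^\dagger]=1$), or equivalently invoking \cite[Theorem 2.3]{kieferova2021quantum}, gives
\begin{align}
\tr\!\left[(\ketbra{0}{0}\ox I)\,\chi_{\bm{i}}\,(\ketbra{0}{0}\ox\tau)\,\chi_{\bm{i}}^\dagger\right] = \tfrac{1}{2}\bigl(1 + \mathrm{Re}\tr[\tau W_{\bm{i}}]\bigr).
\end{align}
Rearranging this already yields the right-hand side of the claim as $2(\cdots)-1$, so it remains only to identify $\mathrm{Re}\tr[\tau W_{\bm{i}}]$ with $\Theta_{\bm{j},\bm{i}}$.

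The heart of the argument—and the step I expect to be the main obstacle—is verifying that the controlled unitary realizes the correct ordered product. I would introduce the cyclic-shift operator $S$ on the $p+1$ work registers (register $0$ carrying $\rho$, registers $1,\dots,p$ carrying $\ket{j_1},\dots,\ket{j_p}$), defined by $S\ket{x_0}\ket{x_1}\cdots\ket{x_p}=\ket{x_1}\cdots\ket{x_p}\ket{x_0}$ and implementable as a ladder of SWAP gates. The key permutation identity, proved by expanding both sides in the computational basis, is that for any single-register operators $A_0,\dots,A_p$,
\begin{align}
\tr\!\left[(A_0\ox A_1\ox\cdots\ox A_p)\,S\right] = \tr[A_0 A_1\cdots A_p].
\end{align}
I would then take the controlled unitary to be $W_{\bm{i}} = \bigl(U_{i_1}\ox V_{i_1}U_{i_2}\ox V_{i_2}U_{i_3}\ox\cdots\ox V_{i_{p-1}}U_{i_p}\ox V_{i_p}\bigr)\,S$, i.e.\ the register-local factors are $L_0=U_{i_1}$, $L_k=V_{i_k}U_{i_{k+1}}$ for $1\le k\le p-1$, and $L_p=V_{i_p}$. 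Absorbing these into the input via $\tau\,(L_0\ox\cdots\ox L_p)=(\rho L_0)\ox(\ketbra{j_1}{j_1}L_1)\ox\cdots\ox(\ketbra{j_p}{j_p}L_p)$ and applying the identity gives
\begin{align}
\tr[\tau W_{\bm{i}}] = \tr\!\left[\rho\, U_{i_1}\ketbra{j_1}{j_1}V_{i_1}\,U_{i_2}\ketbra{j_2}{j_2}V_{i_2}\cdots U_{i_p}\ketbra{j_p}{j_p}V_{i_p}\right],
\end{align}
which is precisely the trace appearing in the definition~\eqref{eq:remark2_need1} of $\Theta_{\bm{j},\bm{i}}$; taking real parts yields $\Theta_{\bm{j},\bm{i}}$.

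Combining the two displays gives $\Theta_{\bm{j},\bm{i}} = 2\tr[(\ketbra{0}{0}\ox I)\chi_{\bm{i}}(\ketbra{0}{0}\ox\rho\ox\ketbra{\bm{j}}{\bm{j}})\chi_{\bm{i}}^\dagger]-1$, as claimed. The delicate points I would watch are matching the orientation convention of $S$ (left- versus right-shift) to the direction in which the product $\prod_{k}U_{i_k}\ketbra{j_k}{j_k}V_{i_k}$ is read, keeping the bookkeeping of which local unitary sits on which register consistent with the gate layout of Figure~\ref{fig:general_chi}, and checking the boundary factors $L_0=U_{i_1}$ and $L_p=V_{i_p}$—these are exactly where an off-by-one slip in the SWAP ladder would surface.
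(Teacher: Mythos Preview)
Your proposal is correct. The paper itself does not supply a proof of this lemma: it simply states the result as an adaptation of \cite[Theorem~2.3]{kieferova2021quantum} and describes the circuit $\chi_{\bm{i}}$ as ``two Hadamard gates, a series of controlled unitaries, and a controlled cyclic permutation,'' leaving the verification to the cited reference. What you have written is precisely the standard argument that lies behind that reference---the Hadamard-test identity giving $\tfrac12(1+\mathrm{Re}\tr[\tau W_{\bm i}])$, combined with the cyclic-shift trace identity $\tr[(A_0\ox\cdots\ox A_p)S]=\tr[A_0\cdots A_p]$---and your register-by-register bookkeeping of the local factors $L_0=U_{i_1}$, $L_k=V_{i_k}U_{i_{k+1}}$, $L_p=V_{i_p}$ correctly reproduces the ordered product in~\eqref{eq:remark2_need1}. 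So your approach does not differ from the paper's; it simply fills in the details the paper omits by citation. The caveats you flag (shift direction, boundary factors) are the right ones to check against the gate layout in Figure~\ref{fig:general_chi}, but they do not affect the correctness of the scheme.
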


More specifically, the unitary operation $\chi_{\bm{i}}$ consists of two Hadamard gates, a series of controlled unitaries, and a controlled cyclic permutation. We implement the unitary $\chi_{\bm{i}}$ on the input state $\ket{0}\bra{0}\ox \rho\ox \ket{\bm{j}}\bra{\bm{j}}$, and then measure the first qubit. The probability of outcome $0$ of the measurement is given by $(\Theta_{\bm{j},\bm{i}}+1)/2$. 

\begin{figure}[ht]
    \centering
    \includegraphics[width=\linewidth]{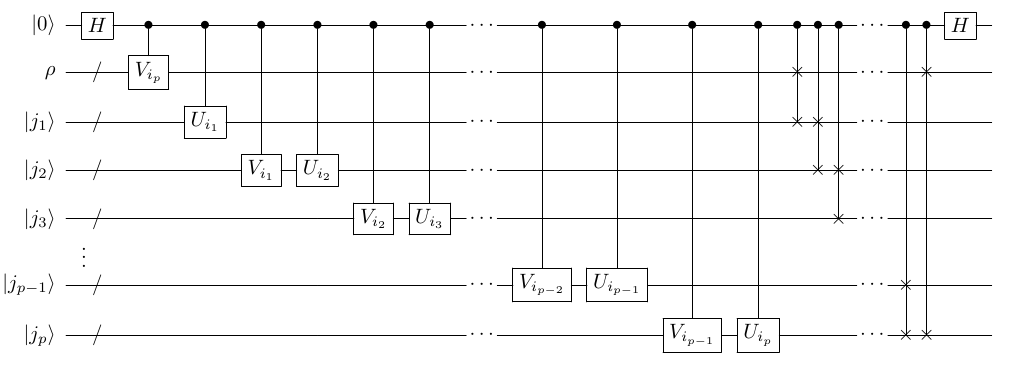}
    \caption{Visualization of unitary operation $\chi_{\bm{i}}$. We evolve the input state $\ket{0}\bra{0}\ox \rho\ox \ket{\bm{j}}\bra{\bm{j}}$ with a unitary operation $\chi_{\bm{i}}$, and then measure the first qubit in the computational basis. The probability of outcome $0$ of the measurement gives the value of $(\Theta_{\bm{j},\bm{i}}+1)/2$. 
    }
    \label{fig:general_chi}
\end{figure}

With the parameterization and sampling procedure mentioned above, we can estimate the loss function in the optimization problem Eq.~\eqref{eq:poly_Z} by sampling on quantum computers, which paves the way for estimating quantum relative entropies (where $p=1$) in the subsequent sections.

\section{Variational quantum algorithms}
\label{sec:VQAs}

In this section, we apply the general method introduced in Section~\ref{sec:polynomial_Z} to estimate the standard $f_t$-divergence via variational quantum algorithms. By employing this as a subroutine, we can then estimate quantum relative entropies by utilizing the approximation from Eqs.~\eqref{eq:approx_rel_entr} and~\eqref{eq:petz_alpha_vari}.

\subsection{Standard quantum $f$-divergence}

Based on the variational expression in Eq.~\eqref{eq:variant_Df}, the evaluation procedure involves parameterizing the linear operator $Z$, estimating the individual terms $\tr[\rho(Z+Z^\dagger)]$, $\tr[\rho Z^\dagger Z]$, and $\tr[\sigma ZZ^\dagger]$, and finally optimizing the loss function. The detailed steps are as follows.

\vspace{0.3cm}
\noindent
\textbf{Parameterization.} The first step is to parameterize the linear operator $Z$ with singular value decomposition, i.e., $Z = U\Lambda V$, where $U$ and $V$ are two unitary operators, $\Lambda$ is non-negative real diagonal matrix. We choose a set of basis $\{\ket{i}\}_{i=1}^d$ and parameterize $\Lambda$ as $\sum_{i=1}^d \lambda_i\ket{i}\bra{i}$, where $\lambda_i \ge 0$ and $\lambda_i \in \mathbb{R}$ for all $i$. All parameters $\lambda_i$ are collectively represented as $\boldsymbol{\lambda}$. Then the unitaries $U$ and $V$ are represented by two parameterized quantum circuit $U(\bm{\theta})$ and $V(\bm{\beta})$ with parameters $\bm{\theta}\in \mathbb{R}^q$ and $\bm{\beta}\in \mathbb{R}^r$. For the sake of simplicity, we denote $U(\bm{\theta})$ as $U_{\bm{\theta}}$ and $V(\bm{\beta})$ as $V_{\bm{\beta}}$. Then the linear operator $Z$ is then parameterized as
\begin{align}
    Z = \sum_{i = 1}^d \lambda_i U_{\bm{\theta}} \ket{i}\bra{i}V_{\bm{\beta}}.
\end{align}

Note that the contribution of each term in Eq.~\eqref{eq:variant_Df} depends on how the linear operator $Z$ acts within the support of $\rho$ and $\sigma$. In practice, when the quantum states are not full-rank, i.e., $\mathrm{rank}(\rho) = \mathrm{rank}(\sigma) = s$, we can restrict $Z$ to an at most $s$-dimensional subspace without affecting the value of the optimization problem. Any contribution from $Z$ outside the support of $\rho$ and $\sigma$ does not affect the traces involving $\rho$ and $\sigma$. Thus, to ensure the efficiency, we can parameterize $\Lambda$ with a sparse diagonal matrix that only contains at most $s$ nonzero elements, i.e., $\Lambda = \sum_{k=1}^s \lambda_{i_k} \ket{i_k}\bra{i_k}$, where $i_k \in [d]$.

\vspace{0.2cm}
\noindent
\textbf{Sampling.} With the above parameterization, we obtain
\begin{align}
\tr[\rho Z^\dagger Z] & = \sum_{i=1}^d \lambda_i^2 \tr\left[\rho V_{\bm{\beta}}^\dagger \ket{i}\bra{i} V_{\bm{\beta}}\right], \\
\tr[\sigma ZZ^\dagger] & = \sum_{i=1}^d \lambda_i^2 \tr\left[\sigma U_{\bm{\theta}}\ket{i}\bra{i}U_{\bm{\theta}}^\dagger\right],  \\
\tr[\rho(Z + Z^\dagger)] & = 4 \sum_{i=1}^d \lambda_i\tr\left[(\ket{0}\bra{0}\ox I)\chi (\ket{0}\bra{0}\ox \rho \ox \ket{i}\bra{i})\chi^\dagger\right] -2,
\end{align}
where $\chi$ is a unitary operation constructed according to the quantum circuit shown in Figure~\ref{fig:unitary_chi}.
Then each individual term above can be estimated on quantum computers. 
\begin{figure}[H]
    \centering
    \includegraphics[width=0.45\linewidth]{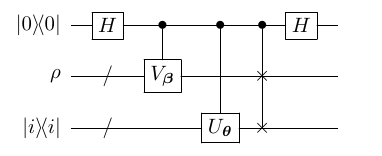}
    \caption{Visualization of unitary operation $\chi$. It consists of two Hadamard gates $H$, controlled $U_{\bm{\theta}}$, $V_{\bm{\beta}}$ gates, and a controlled swap gates.}
    \label{fig:unitary_chi}
\end{figure}
More specifically, denote the probability distributions as
\begin{subequations}
    \begin{align} 
        &p_{\bm{\beta}}^{(i)} := \tr\left[V_{\bm{\beta}} \rho V_{\bm{\beta}}^\dagger \ket{i}\bra{i} \right], \\
        &p_{\bm{\theta}}^{(i)} := \tr\left[U^\dagger_{\bm{\theta}}\sigma U_{\bm{\theta}}\ket{i}\bra{i}\right],\\ 
        &p_{\chi}^{(i)} := \tr\left[(\ket{0}\bra{0}\ox I)\chi (\ket{0}\bra{0}\ox \rho \ox \ket{i}\bra{i})\chi^\dagger\right].
    \end{align}
    \label{eq:sample_p}
\end{subequations}
Then, \( p_{\bm{\beta}}^{(i)} \) can be estimated by evolving the quantum state \( \rho \) using \( V_{\bm{\beta}} \) and performing measurements in the computational basis. The same procedure applies to \( p_{\bm{\theta}}^{(i)} \). Finally, \( p_{\chi}^{(i)} \) can be estimated by evolving the initial state \( \ket{0}\bra{0} \otimes \rho \otimes \ket{i}\bra{i} \) with the unitary \( \chi \) and measuring the first qubit in the computational basis. With these sampling results, we can obtain from classical computations that 
\begin{align}
\tr[\rho Z^\dagger Z] & = \sum_{i=1}^d \lambda_i^2 p_{\bm{\beta}}^{(i)}, \\
\tr[\sigma ZZ^\dagger] & = \sum_{i=1}^d \lambda_i^2 p_{\bm{\theta}}^{(i)}, \\
\tr[\rho(Z + Z^\dagger)] & = 4 \sum_{i=1}^d \lambda_i p_{\chi}^{(i)} -2.
\end{align}

We are now well-prepared to introduce our variational quantum algorithms.

\vspace{0.3cm}
\noindent
\textbf{Algorithms.} The idea is to solve the optimization problem defined in Eq.~\eqref{eq:variant_Df} for $m$ times to get the estimates of $D_{f_{t_j}}(\rho\|\sigma)$ for $j \in \{1,\cdots,m\}$, and then obtain the estimation value of $D_{-r_m}$ with Eq.~\eqref{eq:approx_rel_entr} and $D_{h_m}(\rho\|\sigma)$ with Eq.~\eqref{eq:petz_alpha_vari}, which approximate $D(\rho\|\sigma)$ and $Q_\alpha(\rho\|\sigma)$, respectively. In the case of $\alpha = 2$, the approximation of $Q_2(\rho\|\sigma)$ is directly obtained by the estimation of $D_{f_1}(\rho\|\sigma)$. When estimating $D_{f_{t}}$, the loss function is given as follows:
\begin{align}
    \cL(\bm{\lambda}, \bm{\theta},\bm{\beta}) := &\sum_{i=1}^d \left\{ t\lambda_i^2 p_{\bm{\theta}}^{(i)} + (1-t)\lambda_i^2 p_{\bm{\beta}}^{(i)} +\lambda_i(4p_{\chi}^{(i)} - 2) \right\},
    \label{eq:loss_vqa_rel}
\end{align}
which corresponds exactly to the function we aim to minimize in in Eq.~\eqref{eq:variant_Df}. The optimization problem is thus $\hat{\cL} = \min _{\bm{\lambda},\bm{\theta},\bm{\beta}} \mathcal{L}(\bm{\lambda},\bm{\theta},\bm{\beta})$, and the approximation value of $D_{f_t}(\rho\|\sigma)$ is $\hat{D}_{f_t}(\rho\|\sigma) = \frac{1}{t}(1+\hat{\cL})$. Since the loss function is quadratic with respect to each $\lambda_i$, we can write the solution of $\lambda_i$ as follows:
\begin{align}
    \lambda_i^{\text{opt}} = \max\left\{ 0, \frac{1- 2p_\chi^{(i)}}{t p_\theta^{(i)} + (1-t)p_\beta^{(i)}} \right\},
    \label{eq:lambda_opt}
\end{align}
and the optimization problem is simplified as
\begin{align}
    \hat{\cL} = \min_{\bm{\theta},\bm{\beta}} \mathcal{L}(\bm{\lambda}^{\text{opt}},\bm{\theta},\bm{\beta}).
    \label{eq:def_optimization}
\end{align}
We utilize the gradient descent algorithm to optimize the loss function defined by Eq.~\eqref{eq:loss_vqa_rel}, which requires the gradients of the loss function with respect to each parameter. The values of $p_{\bm{\theta}}^{(i)}$, $p_{\bm{\beta}}^{(i)}$, and $p_{\bm{\chi}}^{(i)}$ are estimated by sampling on quantum computers. The quantum circuit parameters $\bm{\theta}$ and $\bm{\beta}$ are updated with respect to the parameter-shift rule~\cite{schuld2019evaluating}. The pseudocode of the algorithm for estimating $D_{f_t}$ is shown in Algorithm~\ref{alg:algorithm_D_ft}. 

\begin{algorithm}[ht]
\caption{Variational quantum algorithm for estimating standard $f_t$-divergence}\label{alg:algorithm_D_ft}
\LinesNumbered
\BlankLine

\KwIn{\\  \begin{tabular}{p{2.5cm}l}
     $K$ & number of iterations \\
     $\ell_r$ & learning rate\\
     $N$ & number of samples\\
     $t$ & function parameter \\
     $\rho$, $\sigma$ & quantum states with the same support
\end{tabular}} 

\BlankLine

\KwOut{\\ \begin{tabular}{p{2.5cm}l} 
$\hat D_{f_t}(\rho\|\sigma)$ & estimate of $D_{f_t}(\rho\|\sigma)$
\end{tabular}}

\BlankLine

$\bm{\theta}^1\leftarrow $ Random initialization in $[0,2\pi]^q$.

$\bm{\beta}^1\leftarrow $ Random initialization in $[0,2\pi]^r$.

\If{$t=0$}{

\Return $\hat{D}_{f_0}(\rho\|\sigma) \leftarrow 0$.
}
\ForEach{$k \in \{1, 2, \ldots, K\}$}{

Sample $N$ times to evaluate $p_{\bm{\theta}}^{(i)}$, $p_{\bm{\beta}}^{(i)}$ and $p_{\chi}^{(i)}$ for $i = 1,\dots,d$, with $\bm{\theta}^k$ and $\bm{\beta}^k$ on quantum computers.

Compute $\bm{\lambda}^{\text{opt}}$ with Eq.~\eqref{eq:lambda_opt}.

Sample $N$ times to evaluate $\nabla_{\bm{\theta}} \cL$ and $\nabla_{\bm{\beta}} \cL$ using the parameter shift rule.

$\bm{\theta}^{k+1}\leftarrow \bm{\theta}^k + \ell_r\nabla_{\bm{\theta}} \cL$ \\$\bm{\beta}^{k+1}\leftarrow \bm{\beta}^k + \ell_r\nabla_{\bm{\beta}} \cL$
}

$\hat{\mathcal{L}} \leftarrow \mathcal{L}({\bm{\theta}}^{K+1}, {\bm{\beta}}^{K+1})$.

\Return $\hat{D}_{f_t}(\rho\|\sigma) \leftarrow (1+\hat \cL)/t$
\end{algorithm}

It is worth noting that the only step requiring quantum computers is obtaining the sampling results $p_{\bm{\theta}}^{(i)}$, $p_{\bm{\beta}}^{(i)}$ and $p_{\chi}^{(i)}$. All other computations can be performed on classical computers.  Furthermore, the quantum sampling procedure is naturally suited to distributed scenarios. Specifically, according to Eq.~\eqref{eq:sample_p}, the probability distributions \(p_{\bm{\theta}}^{(i)}\) and \(p_\chi^{(i)}\) can be sampled on a quantum computer preparing the state \(\rho\), while \(p_{\bm{\beta}}^{(i)}\) can be obtained on a quantum computer preparing \(\sigma\). This enables the comparison of quantum data generated by distributed quantum devices, making the approach directly applicable to cross-platform verification and related distributed quantum information tasks~\cite{elben2020cross,greganti2021cross,knorzer2023cross,zheng2024cross}.

\subsection{Quantum relative entropy}
Recall that the approximation of quantum relative entropy is given by
\begin{align}
    \hat D(\rho\|\sigma) \approx \hat{D}_{-r_m}(\rho\|\sigma) = -\sum_{j=1}^m \frac{w_j }{\ln 2} \hat{D}_{f_{t_j}}(\rho\|\sigma),
\end{align}
where $\{t_j\}_{j=1}^m$ is the set of GRJ quadrature nodes with fixed node $t_1 = 0$ or $t_m = 1$. Based on Algorithm~\ref{alg:algorithm_D_ft}, the algorithm for evaluating quantum relative entropy is shown in Algorithm~\ref{alg:algorithm_rel}.

\vspace{0.2cm}
\begin{algorithm}[H]
\caption{Variational quantum algorithm for estimating quantum relative entropy}\label{alg:algorithm_rel}
\LinesNumbered
\BlankLine

\KwIn{\\  \begin{tabular}{p{2.5cm}l}
     $K$ & number of iterations \\
     $\ell_r$ & learning rate\\
     $N$ & number of samples\\
     $m$ & number of quadrature nodes \\
     $\{t_j,w_j\}_{j=1}^m$ & GRJ quadrature nodes and weights \\
     $\rho$, $\sigma$ & quantum states with the same support
\end{tabular}} 

\BlankLine

\KwOut{\\ \begin{tabular}{p{2.5cm}l} 
$\hat D(\rho\|\sigma)$ & estimate of $D(\rho\|\sigma)$
\end{tabular}}

\BlankLine

\ForEach{$j \in \{1, 2, \ldots, m\}$}{
Run Algorithm~\ref{alg:algorithm_D_ft} with inputs $K$, $\ell_r$, $N$, and $t= t_j$. 

Save $\hat{D}_{f_{t_j}}(\rho\|\sigma)$.
}

$\hat{D}_{-r_m}(\rho\|\sigma) \leftarrow -\sum_{j=1}^m \frac{w_j}{\ln 2}\hat{D}_{f_{t_j}}(\rho\|\sigma)$.

\Return $\hat{D}(\rho\|\sigma) \leftarrow \hat{D}_{-r_m}(\rho\|\sigma)$
\end{algorithm}

\subsection{Petz \Renyi divergence}
\label{sec:alg_Petz_a}

Recall that the approximation of $Q_\alpha(\rho\|\sigma)$ is given by
\begin{align}
    \hat{Q}_\alpha(\rho\|\sigma) \approx \hat D_{h_m}(\rho\|\sigma) = 1 + \frac{\sin(\alpha \pi)}{\pi}\sum_{j=1}^m w_j \hat{D}_{f_{t_j}}(\rho\|\sigma), 
\end{align}
where $\{t_j\}_{j=1}^m$ is the set of GRJ quadrature nodes with fixed node $t_1=0 $ or $t_m = 1$. We note here that the quadrature nodes for estimating $Q_\alpha(\rho\|\sigma)$ are different from those for estimating $D_{-r_m}(\rho\|\sigma)$, since they depend on the parameter $\alpha$. The approximation of $D_\alpha(\rho\|\sigma)$ is thus 
\begin{align}
    \hat{D}_\alpha(\rho\|\sigma)=\frac{1}{\alpha - 1}\log \hat{Q}_\alpha(\rho\|\sigma).
\end{align}
The algorithm for estimating the Petz \Renyi divergence is shown in Algorithm~\ref{alg:algorithm_Petz_a}. In the case of $\alpha = 2$, the estimate of $Q_2(\rho\|\sigma)$ is obtained by
\begin{align}
    \hat{Q}_2(\rho\|\sigma) = 1-\hat{D}_{f_1}(\rho\|\sigma),
\end{align}
and the approximation of $D_2(\rho\|\sigma)$ is thus $\hat{D}_2(\rho\|\sigma) = \log \hat{Q}_2(\rho\|\sigma)$, which can be obtained by running Algorithm~\ref{alg:algorithm_D_ft} with input $t = 1$.

\begin{algorithm}[H]
\caption{Variational quantum algorithm for estimating Petz \Renyi divergence} \label{alg:algorithm_Petz_a}
\LinesNumbered
\BlankLine

\KwIn{\\  \begin{tabular}{p{2.5cm}l}
     $K$ & number of iterations \\
     $\ell_r$ & learning rate\\
     $N$ & number of samples\\
     $m$ & number of quadrature nodes \\
     $\{t_j,w_j\}_{j=1}^m$ & GRJ quadrature nodes and weights \\
     $\rho$, $\sigma$ & quantum states with the same support
\end{tabular}} 

\BlankLine

\KwOut{\\ \begin{tabular}{p{2.5cm}l} 
$\hat D_\alpha(\rho\|\sigma)$ & estimate of $D_\alpha(\rho\|\sigma)$
\end{tabular}}

\BlankLine

\ForEach{$j \in \{1, 2, \ldots, m\}$}{
Run Algorithm~\ref{alg:algorithm_D_ft} with inputs $K$, $\ell_r$, $N$, and $t= t_j$. 

Save $\hat{D}_{f_{t_j}}(\rho\|\sigma)$.
}

$\hat{Q}_\alpha(\rho\|\sigma) \leftarrow 1 + \frac{\sin(\alpha \pi)}{\pi}\sum_{j=1}^m w_j \hat{D}_{f_{t_j}}(\rho\|\sigma)$.

\Return $\hat D_\alpha(\rho\|\sigma) \leftarrow \frac{1}{\alpha - 1}\log\hat{Q}_\alpha(\rho\|\sigma)$
\end{algorithm}

\section{Convergence analysis}
\label{sec:converge_analysis}

In this section, we prove that our loss function is operator-convex, ensuring a well-behaved optimization
landscape. We further discuss the convergence and trainability of our approach, and present numerical evidence indicating that our loss function is absent of the barren plateau phenomenon.

\subsection{Convexity and global optimality}

Let $\mathfrak{L}(\mathcal{H})$ denote the set of all bounded linear operators on the Hilbert space $\mathcal{H}$. Consider a function $g:\mathfrak{L}(\mathcal{H}) \to \mathbb{R}$. We say that $g$ is operator-convex~\cite{boyd2004convex} if, for all $X, Y \in \mathfrak{L}(\mathcal{H})$ and $\omega \in [0,1]$, the following holds: $g((1-\omega)X + \omega Y) \leq (1-\omega)g(X) + \omega g(Y)$. 

We show that the loss function in Eq.~\eqref{eq:variant_Df} is operator-convex with respect to $Z$ and we also characterize its optimal solution.

\begin{lemma}
    Let $g(Z,Z^\dagger) = \tr[\rho] + \tr[\rho(Z+Z^\dagger)] + (1-t)\tr[\rho Z^\dagger Z] + t\tr[\sigma Z Z^\dagger]$, where $\supp(\rho)\subseteq \supp(\sigma)$. Then the function $g(Z,Z^\dagger)$ is operator-convex with respect to $Z$. Moreover, any solution $\tilde Z$ of the following Sylvester's equation,
    \begin{align}
        (1-t)\rho \tilde Z^\dagger + t\tilde Z^\dagger \sigma = -\rho ~~\text{ or }~~(1-t)\tilde Z \rho  + t \sigma \tilde Z = -\rho,
        \label{eq:Sylvester}
    \end{align}
    is an optimizer of the infimum problem $\inf_{Z\in \mathfrak{L}(\mathcal{H})}g(Z,Z^\dagger)$. The above equation always admits at least one solution, which is unique if and only if at least one of $\rho$ and $\sigma$ is full-rank.
    \label{lemma:Sylvester}
\end{lemma}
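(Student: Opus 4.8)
The plan is to treat the two assertions in turn: first the convexity of $g$ as a real-valued function on $\mathfrak{L}(\mathcal{H})$, then the first-order characterization of its minimizers via Sylvester's equation together with the existence/uniqueness dichotomy. Throughout I use $t\in(0,1]$ and $\supp(\rho)\subseteq\supp(\sigma)$ as inherited from the setting of Lemma~\ref{lemma:variational_ft}.

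For convexity I would verify the defining inequality directly along a chord. Setting $W := X-Y$ and expanding, the constant term $\tr[\rho]$ and the linear term $\tr[\rho(Z+Z^\dagger)]$ contribute nothing to the convexity gap, so only the two quadratic terms survive. A short computation yields
\[
g\bigl((1-\omega)X+\omega Y\bigr) - (1-\omega)g(X) - \omega g(Y) = -\omega(1-\omega)\bigl\{(1-t)\tr[\rho W^\dagger W] + t\tr[\sigma W W^\dagger]\bigr\}.
\]
Since $\rho,\sigma\succeq 0$ one has $\tr[\rho W^\dagger W] = \|W\rho^{1/2}\|_{\mathrm{HS}}^2 \geq 0$ and $\tr[\sigma W W^\dagger] = \|\sigma^{1/2}W\|_{\mathrm{HS}}^2 \geq 0$, and for $t\in(0,1]$ both coefficients $1-t$ and $t$ are nonnegative. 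Hence the right-hand side is $\leq 0$, which is exactly operator-convexity in the sense defined.

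For the optimizer I would use that $g$ is a smooth convex function, so a point is a global minimizer if and only if it is stationary. Computing the directional derivative of $g$ at $Z$ along $W$ and collecting, via cyclicity of the trace, the coefficients of $W$ and of $W^\dagger$ shows that vanishing of the derivative for all $W$ is equivalent to $\rho + (1-t)\rho Z^\dagger + tZ^\dagger\sigma = 0$; taking adjoints (using that $\rho,\sigma$ are Hermitian) gives the equivalent form $(1-t)Z\rho + t\sigma Z = -\rho$. These are precisely the two equations in~\eqref{eq:Sylvester}, so any solution $\tilde Z$ is a stationary point and hence a global optimizer.

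The main work, and the step I expect to be the real obstacle, is the existence-uniqueness dichotomy. I would regard $\mathcal{S}(Z) := t\sigma Z + (1-t)Z\rho$ as a linear operator on $\mathfrak{L}(\mathcal{H})$ endowed with the Hilbert-Schmidt inner product; since $\rho,\sigma$ are Hermitian, $\mathcal{S}$ is self-adjoint, so $\mathrm{range}(\mathcal{S}) = \ker(\mathcal{S})^{\perp}$. Diagonalizing $\rho = \sum_j \eta_j \ketbra{e_j}{e_j}$ and $\sigma = \sum_k \mu_k \ketbra{u_k}{u_k}$, the rank-one operators $\ketbra{u_k}{e_j}$ are eigenvectors of $\mathcal{S}$ with eigenvalues $t\mu_k + (1-t)\eta_j$. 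For $t\in(0,1)$ these are sums of nonnegative numbers, so they vanish only when $\mu_k=\eta_j=0$; thus $\ker\mathcal{S}$ is nontrivial exactly when both $\rho$ and $\sigma$ are singular, giving uniqueness if and only if at least one of them is full-rank. Existence always holds because $C=-\rho$ is orthogonal to $\ker\mathcal{S}$: every kernel generator $\ketbra{u_k}{e_j}$ has $\eta_j=0$, whence $\tr[\ketbra{e_j}{u_k}\rho] = \bra{u_k}\rho\ket{e_j} = 0$. Finally, the boundary case $t=1$ must be handled separately, where the equation degenerates to $\sigma Z = -\rho$; solvability then follows from $\supp(\rho)\subseteq\supp(\sigma)$ (so $\mathrm{range}(\rho)\subseteq\mathrm{range}(\sigma)$), and uniqueness holds iff $\sigma$ is full-rank, which is consistent with the stated criterion once one notes that under the support hypothesis a full-rank $\rho$ forces $\sigma$ to be full-rank.
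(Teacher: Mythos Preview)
Your argument is correct and structurally parallel to the paper's, but the details differ in a couple of places worth flagging. For convexity, the paper vectorizes $Z$ and writes $g$ as a quadratic form with positive semidefinite coefficient $A_t=(1-t)I\ox\rho^*+t\sigma\ox I$, then bounds the cross term; your direct expansion of the chord defect to $-\omega(1-\omega)\bigl[(1-t)\tr(\rho W^\dagger W)+t\tr(\sigma WW^\dagger)\bigr]$ is the same computation without the vectorization layer and is, if anything, cleaner. For the Sylvester step, the paper invokes two black-box results from Horn and Johnson: the spectral criterion for unique solvability (no common eigenvalue of $(1-t)\rho$ and $-t\sigma$) and the kernel-projection criterion $P_\rho\,\rho\,P_\sigma=0$ for existence. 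Your route---viewing $\mathcal S$ as self-adjoint on $\mathfrak L(\cH)$ with the Hilbert--Schmidt inner product, diagonalizing in the basis $\ketbra{u_k}{e_j}$, and checking that $-\rho$ is orthogonal to $\ker\mathcal S$---is a self-contained proof of exactly those facts, so it trades a citation for an explicit spectral computation. You also separate out the boundary case $t=1$ and reconcile the uniqueness criterion with the support hypothesis (full-rank $\rho$ forces full-rank $\sigma$), which the paper leaves implicit; this is a nice touch and makes your treatment slightly more complete.
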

\begin{proof}
    First, we need to show $g(\omega Z_1 + (1-\omega)Z_2)\le \omega g(Z_1) + (1-\omega)g(Z_2)$ holds for any $Z_1$, $Z_2$ and $\omega\in[0,1]$. Let $\ket{\mathrm{vec}(A)}:= \sum_{ij} \<i|A|j\> \ket{i}\ket{j}$ be the vectorization of matrix $A$. According to the vectorization equality $\ket{\mathrm{vec}(AXB)} = A\ox B^\mathrm{T} \ket{\mathrm{vec}(X)}$, we have $g(Z) = \tr[\rho] + 2\mathrm{Re}\braket{\mathrm{vec}(\rho)}{\mathrm{vec}(Z)} + \bra{\mathrm{vec}(Z)}A_t\ket{\mathrm{vec}(Z)}$, where $A_t = (1-t) I\ox \rho^* + t\sigma \ox I$. Thus, 
    \begin{align}
        & \omega g(Z_1) + (1-\omega)g(Z_2) - g(\omega Z_1 + (1-\omega)Z_2) \notag\\
        =& \omega  \bra{\mathrm{vec}(Z_1)}A_t\ket{\mathrm{vec}(Z_1)} + (1-\omega)\bra{\mathrm{vec}(Z_2)}A_t\ket{\mathrm{vec}(Z_2)}- \omega^2 \bra{\mathrm{vec}(Z_1)}A_t\ket{\mathrm{vec}(Z_1)} \\
        &- (1-\omega)^2 \bra{\mathrm{vec}(Z_2)}A_t\ket{\mathrm{vec}(Z_2)} - 2\omega(1-\omega) \mathrm{Re} \bra{\mathrm{vec}(Z_1)}A_t\ket{\mathrm{vec}(Z_2)}\\
        \ge &0,
    \end{align}
    which proves the convexity, where the last inequality follows from $2\mathrm{Re}\bra{\mathrm{vec}(Z_1)}A_t\ket{\mathrm{vec}(Z_2)}\le \bra{\mathrm{vec}(Z_1)}A_t\ket{\mathrm{vec}(Z_1)} + \bra{\mathrm{vec}(Z_2)}A_t\ket{\mathrm{vec}(Z_2)}$. 

    By the convexity of the loss function, the solution of the infimum problem $\inf_{Z\in \mathfrak{L}(\mathcal{H})}g(Z,Z^\dagger)$ should satisfy $\frac{\partial g}{\partial Z} = 0$ and $\frac{\partial g}{\partial Z^\dagger} = 0$. Taking the variation of $g$ with respect to $Z$ yields $\delta g(Z,Z^\dagger) = \tr[\left(\rho + (1-t) \rho Z^\dagger + tZ^\dagger \sigma \right)\delta Z] = 0$, which leads to the Sylvester's equation Eq.~\eqref{eq:Sylvester}. Due to the Sylvester's theorem~\cite[Theorem 2.4.4.1]{horn2012matrix}, Eq.~\eqref{eq:Sylvester} is uniquely solvable if and only if $(1-t)\rho$ and $-t\sigma$ do not share any eigenvalue. Since $t\in(0,1]$, $\rho$ and $\sigma$ are positive semidefinite and $\supp(\rho)\subseteq \supp(\sigma)$, we have that $(1-t)\rho$ and $-t\sigma$ can share the eigenvalue $0$ only if both are not full-rank. It remains to prove the existence of the solution of the Sylvester's equation when both $\rho$ and $\sigma$ are not full-rank. In this case, $\rho$ and $\sigma$ have zero eigenvalues. A necessary and sufficient condition of the existence of the solution is given by $P_\rho \rho P_\sigma = 0$~\cite{horn2012matrix}, where $P_\rho$ and $P_\sigma$ denote the orthogonal projections onto their kernels $\ker \rho$ and $\ker \sigma$. Since $P_\rho \rho = 0$, $P_\rho \rho P_\sigma = 0$, and thus Eq.~\eqref{eq:Sylvester} has at least one solution, which completes the proof.
\end{proof}

Lemma~\ref{lemma:Sylvester} shows that our loss function is operator-convex in $Z$, implying that any local minimum is also a global minimum. Consequently, our optimization landscape is free of spurious local minima, ensuring a simple and well-behaved structure. 

\subsection{Convergence, trainability and barren plateau}

The convergence of a variational quantum algorithm is fundamentally determined by the structure of its loss function and the resulting optimization landscape. As shown in the previous section, our loss function is operator-convex, guaranteeing the existence of a global optimum at the operator level. This means that, provided the parameterized circuit is sufficiently expressive to represent all linear operators, the optimal solution can be reached. Achieving adequate expressibility typically requires deeper or more complex circuit architectures, which may increase computational overhead and resource requirements. In practice, this can be addressed by employing a well-designed ansatz and repeating it over a moderate number of layers. As the circuit depth increases, the expressibility of the ansatz improves, enhancing the chances of finding a better solution.

However, prior studies have shown that highly expressive parameterized quantum circuits can suffer from barren plateaus—regions in the optimization landscape where gradients vanish exponentially, making training infeasible~\cite{mcclean2018barren,cerezo2021variational}. In particular, Holmes et al.~\cite{holmes2022connecting} demonstrated that for a broad class of loss functions, such as energy expectation values $C(\theta) = \tr[H U(\theta)\rho U^\dagger(\theta)]$, there is a fundamental trade-off: highly expressive ansatzes tend to produce vanishing gradients, resulting in barren plateaus. Nevertheless, for certain specialized cost functions—particularly those with specific symmetries or locality properties—this trade-off does not always apply, even for highly expressive circuits. This raises a key question: can our loss function avoid the barren plateau phenomenon?

Intuitively, our loss function defined in Eq.~\eqref{eq:loss_vqa_rel} may partially circumvent the barren plateau problem. Unlike most cost functions, our loss function combines contributions from terms such as $\mathrm{Tr}[\rho Z^{\dagger}Z]$, $\mathrm{Tr}[\sigma ZZ^{\dagger}]$, and $\mathrm{Tr}[\rho(Z+Z^{\dagger})]$, which is non-global as $Z^\dagger Z$, $ZZ^\dagger$ and $Z+Z^\dagger$ are decomposed into projectors $\ket{i}\bra{i}$ rather than a single global expectation value. Additionally, the mixture of quadratic and linear dependence on the eigenvalue weights $\lambda_i$ ensures that gradients are sourced from multiple, non-identical directions in parameter space, reducing the extent to which gradients are averaged out under random circuit exploration.

To verify this intuition, we conduct numerical experiments comparing two different loss functions. The first is our proposed loss function from Eq.~\eqref{eq:loss_vqa_rel}:
\begin{align}
    \begin{split}
        \ell_1(\bm{\theta},\bm{\beta}) = \sum_{i=1}^d &\left\{ t\lambda_i^2 \tr\left[U^\dagger_{\bm{\theta}}\sigma U_{\bm{\theta}}\ket{i}\bra{i}\right] + (1-t)\lambda_i^2 \tr\left[V_{\bm{\beta}} \rho V_{\bm{\beta}}^\dagger \ket{i}\bra{i} \right] \right.\\
    &\left.+\lambda_i(4\tr\left[(\ket{0}\bra{0}\ox I)\chi (\ket{0}\bra{0}\ox \rho \ox \ket{i}\bra{i})\chi^\dagger\right] - 2) \right\},
    \end{split}
\end{align}
where $\chi$ is the extended SWAP test circuit. The second loss function is a standard global cost:
\begin{align}
    \ell_2(\bm{\alpha}) = \tr[W_{\bm{\alpha}}\rho W^\dagger_{\bm{\alpha}}\sigma].
\end{align}

For all parameterized quantum circuits $U_{\bm{\theta}}$, $V_{\bm{\beta}}$, and $W_{\bm{\alpha}}$, we employ a fixed ansatz. Specifically, we adopt Circuit-$9$ from~\cite[Figure 2]{sim2019expressibility}, as illustrated in Figure~\ref{fig:ansatz_bp}, since it has been shown that its expressibility increases monotonically with the number of layers from $1$ to $5$.

\begin{figure}[H]
    \centering
    \includegraphics[width=0.35\linewidth]{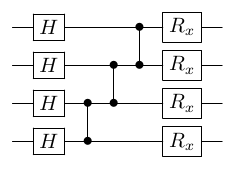}
    \caption{Ansatz for testing barren plateau phenomenon.}
    \label{fig:ansatz_bp}
\end{figure}

To assess the scale of the gradients, we compute $\partial_{\bm{\theta}}\ell_1(\bm{\theta},\bm{\beta})$, $\partial_{\bm{\beta}}\ell_1(\bm{\theta},\bm{\beta})$, and $\partial_{\bm{\alpha}}\ell_2(\bm{\alpha})$ at $N = 10{,}000$ uniformly random samples $\{\bm{\theta}_i,\bm{\beta}_i,\bm{\alpha}_i\}_{i=1}^{N}$, and evaluate the mean absolute values $\mathbf{E}[|\partial_{\bm{\theta}} \ell_1
(\bm{\theta},\bm{\beta})|]=\sum_{i=1}^N|\partial_{\bm{\theta}} \ell_1
(\bm{\theta}_i, \bm{\beta}_i)|/N$, $\mathbf{E}[|\partial_{\bm{\beta}} \ell_1
(\bm{\theta},\bm{\beta})|]=\sum_{i=1}^N|\partial_{\bm{\beta}} \ell_1(\bm{\theta}_i,\bm{\beta}_i)|/N$, and $\mathbf{E}[|\partial_{\bm{\alpha}} \ell_2
(\bm{\alpha})|] = \sum_{i=1}^N|\partial_{\bm{\alpha}} \ell_2(\bm{\alpha}_i)|/N$. For computational reasons, results for $5$-qubit cases are obtained using $N=1{,}000$ random samples.

The results are shown in Figure~\ref{fig:test_barren_plateau}. Figures~\ref{fig:grad_theta_layers}, \ref{fig:grad_beta_layers}, and \ref{fig:grad_alpha_layers} show the gradient scaling with respect to the number of layers. Notably, the gradient scaling of both $\ell_1(\bm{\theta},\bm{\beta})$ and $\ell_2(\bm{\alpha})$ does not decrease with the number of layers, despite the increasing expressibility of the parameterized circuits. However, as shown in Figure~\ref{fig:grad_5layers_qubit} (For visual clarity, the data vectors are normalized before plotting Figure~\ref{fig:grad_5layers_qubit}), the gradient scaling of $\ell_2(\bm{\alpha})$ decreases exponentially with respect to the number of qubits, while the gradient scaling of our loss function $\ell_1(\bm{\theta},\bm{\beta})$ remains essentially unchanged. 

These observations provide preliminary evidence that our proposed loss function circumvents the barren plateau phenomenon, maintaining trainability even as circuit expressibility increases. While the theoretical explanation for this behavior remains unclear, these results highlight the potential of our approach and motivate further systematic investigation in the future work.

\begin{figure}[H]
    \centering
    \begin{subfigure}{0.46\textwidth}
        \centering
        \includegraphics[width=\linewidth]{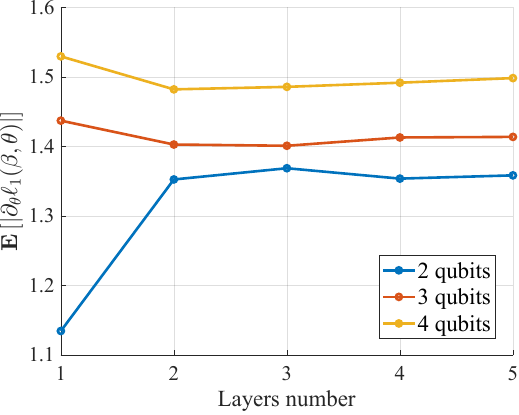}
        \caption{Gradient scaling of $\ell_1(\bm{\theta})$ vs. layers number}
        \label{fig:grad_theta_layers}
    \end{subfigure}
    \hfill
    \begin{subfigure}{0.46\textwidth}
        \centering
        \includegraphics[width=\linewidth]{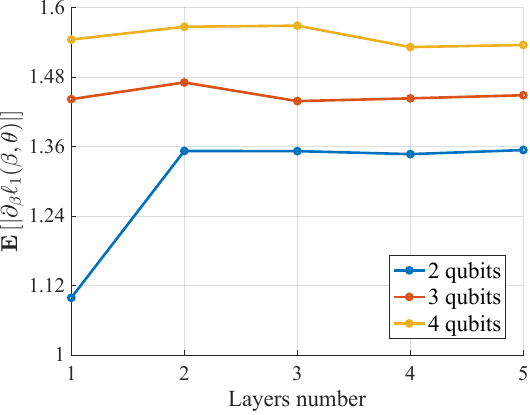}
        \caption{Gradient scaling of $\ell_1(\bm{\beta})$ vs. layers number}
        \label{fig:grad_beta_layers}
    \end{subfigure}

\vspace{1em}
    \begin{subfigure}{0.47\textwidth}
        \centering
        \includegraphics[width=\linewidth]{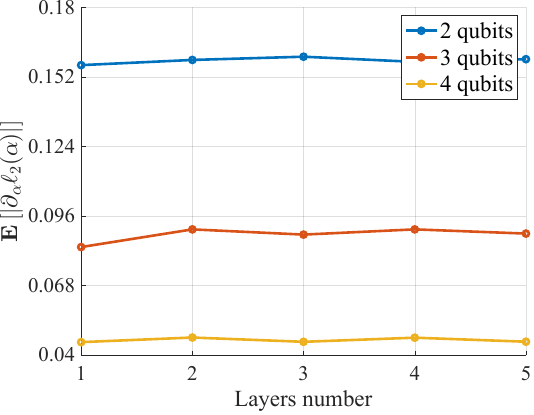}
        \caption{Gradient scaling of $\ell_2(\bm{\alpha})$ vs. layers number}
        \label{fig:grad_alpha_layers}
    \end{subfigure}
    \hfill
    \begin{subfigure}{0.46\textwidth}
        \centering
        \includegraphics[width=\linewidth]{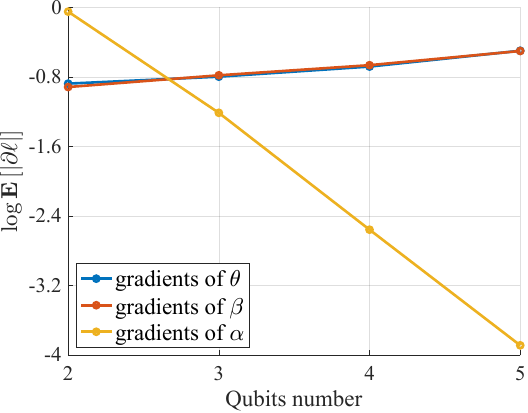}
        \caption{Gradient scaling vs. qubits number ($5$-layers)}
        \label{fig:grad_5layers_qubit}
    \end{subfigure}
    \caption{Gradient scaling with respect to number of layers and qubits.}
    \label{fig:test_barren_plateau}
\end{figure}

\section{Error analysis}
\label{sec:error_analysis}

In this section, we show that our estimates for quantum relative entropy and Petz \Renyi divergence serve as rigorous lower bounds for their true values, despite the heuristic nature of the algorithms. We also provide an error analysis for these estimates, given the relative error of Algorithm~\ref{alg:algorithm_D_ft}.

\subsection{Rigorous lower bounds}
\label{sec:lowerbound}

We first consider the estimate of quantum relative entropy. The rigorous lower bound is obtained using quadrature approximations with the fixed node at $t_1 = 0$. Denote $r^1_m(x)$ and $r^0_m(x)$ as the quadrature approximations of $\log x$ with fixed nodes at $t_m = 1$ and $t_1 = 0$, respectively. According to~\cite[Theorem 1]{faust2023rational}, we have $r^0_m(x) \ge \log x \ge r^1_m(x)$. The infimum optimization in Lemma~\ref{lemma:variational_ft} ensures that our approximation is always larger than the exact value, i.e., $\hat{D}_{f_t} (\rho\|\sigma) \ge D_{f_t}(\rho\|\sigma)$. Therefore, we have
\begin{align}
    D_{-r_m^0}(\rho\|\sigma) = -\sum_{j=1}^m \frac{w_j}{\ln2}D_{f_{t_j}}(\rho\|\sigma) \ge -\sum_{j=1}^m \frac{w_j}{\ln 2}\hat{D}_{f_{t_j}}(\rho\|\sigma) = \hat{D}(\rho\|\sigma).
\end{align}
This shows that our approximation serves as a lower bound, i.e.,
\begin{align}
    D(\rho\|\sigma) = D_{-\log}(\rho\|\sigma) \ge D_{-r_m^{0}}(\rho\|\sigma) \ge \hat{D}(\rho\|\sigma).
    \label{eq:exact_lowerbound}
\end{align}

The reasoning is analogous for Petz \Renyi divergence. Denote $h_m^1(x)$ and $h_m^0(x)$ as the quadrature approximations of $h(x) = x^{1-\alpha}$ with fixed nodes at $t_m = 1$ and $t_1 = 0$, respectively. For $\alpha \in (0,1)$, we have $h_m^1(x) \leq h(x) \leq h_m^0(x)$, while for $\alpha \in (1,2)$, we have $h_m^0(x) \leq h(x) \leq h_m^1(x)$. For $\alpha \in (0,1)$, it follows that $Q_\alpha(\rho\|\sigma) = D_h(\rho\|\sigma) \leq D_{h^0_m}(\rho\|\sigma) \leq \hat{D}_{h^0_m}(\rho\|\sigma)$. Similarly, for $\alpha \in (1,2)$, we have $Q_\alpha(\rho\|\sigma) = D_h(\rho\|\sigma) \geq D_{h_m^0}(\rho\|\sigma) \geq \hat{D}_{h_m^0}(\rho\|\sigma)$. Therefore, in both cases, the lower bound for Petz \Renyi divergences holds:
\begin{align}
    D_\alpha(\rho\|\sigma) = \frac{1}{\alpha - 1}\log Q_\alpha(\rho\|\sigma) \geq \frac{1}{\alpha - 1}\log \hat{D}_{h_m^0}(\rho\|\sigma) = \hat{D}_\alpha (\rho\|\sigma).
\end{align}

\subsection{Error bound for estimating quantum relative entropy}

The total error of our estimate comprises two components: $\varepsilon_Q$, which arises from the quadrature approximation and can be reduced by increasing the number of quadrature nodes $m$; and $\varepsilon_V$, which stems from the heuristic nature of the variational algorithms. Without loss of generality, we use the set of GRJ quadrature nodes with the fixed node $t_m = 1$ for this analysis. The error analysis for the algorithm using quadrature nodes with the fixed node $t_1 = 0$ is analogous and is detailed in Appendix~\ref{appendix:error_analysis}.

\begin{proposition}[Error bound for estimating $D(\rho\|\sigma)$]
\label{pro:error_rel_entr}
For any quantum states $\rho$ and $\sigma$ with their supports $\supp(\rho) \subseteq \supp(\sigma)$, if the relative error of Algorithm~\ref{alg:algorithm_D_ft} is within $\varepsilon_V$, i.e., 
\begin{align}
    \left|\frac{\hat{D}_{f_{t_j}}(\rho\|\sigma) - D_{f_{t_j}}(\rho\|\sigma)}{D_{f_{t_j}}(\rho\|\sigma)}\right| \le \varepsilon_V, ~~j = 1,2,\dots,m,
    \label{eq:epsilon_V_assum}
\end{align}
then the relative error of Algorithm~\ref{alg:algorithm_rel} can be bounded by
\begin{align}
    \left|\frac{\hat D(\rho\|\sigma) - D(\rho\|\sigma)}{D(\rho\|\sigma)}\right| \leq C\frac{1+\ve_V}{m^2} + \ve_V,
\end{align}
where $C := (Q_0(\rho\|\sigma) + Q_2(\rho\|\sigma))/(D(\rho\|\sigma)\ln2)$ and $m$ is the number of quadrature nodes.
\end{proposition}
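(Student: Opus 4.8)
The plan is to split the total error into a quadrature part and a variational part and bound each using the linearity and monotonicity of the standard $f$-divergence in $f$. Write $D_{-r_m}(\rho\|\sigma) = -\sum_{j=1}^m \frac{w_j}{\ln 2}D_{f_{t_j}}(\rho\|\sigma)$ for the exact quadrature value (with the fixed node $t_m=1$) and recall that $\hat D(\rho\|\sigma) = -\sum_{j=1}^m \frac{w_j}{\ln 2}\hat D_{f_{t_j}}(\rho\|\sigma)$. The starting point is the triangle inequality
\begin{align}
    \left|\frac{\hat D - D}{D}\right| \le \frac{|\hat D - D_{-r_m}|}{D} + \frac{|D_{-r_m} - D|}{D}, \notag
\end{align}
where the first term carries the variational error $\varepsilon_V$ and the second is the pure quadrature error.

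For the variational term, I would first establish that each $D_{f_{t_j}}(\rho\|\sigma) \le 0$: since $f_t$ is operator concave with $f_t(1) = 0$ for $t\in(0,1]$, the operator Jensen inequality gives $D_{f_t}(\rho\|\sigma) \le f_t(1) = 0$ for trace-one states. Combined with positivity of the GRJ weights $w_j$, this yields $\sum_j \frac{w_j}{\ln 2}|D_{f_{t_j}}| = -\sum_j \frac{w_j}{\ln 2}D_{f_{t_j}} = D_{-r_m}$. Applying the relative-error hypothesis $|\hat D_{f_{t_j}} - D_{f_{t_j}}| \le \varepsilon_V |D_{f_{t_j}}|$ term by term then gives $|\hat D - D_{-r_m}| \le \varepsilon_V \sum_j \frac{w_j}{\ln 2}|D_{f_{t_j}}| = \varepsilon_V D_{-r_m}$, so the first term is at most $\varepsilon_V D_{-r_m}/D$.

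For the quadrature term I would use linearity of $D_f$ in $f$ to write $D_{-r_m} - D = D_{-r_m} - D_{-\log} = D_{\,\log - r_m^1}(\rho\|\sigma)$, where $\log - r_m^1 \ge 0$ pointwise because $r_m^1 \le \log$. The crucial ingredient is a pointwise error estimate for the GRJ quadrature of the form $0 \le \log x - r_m^1(x) \le \frac{1}{m^2\ln 2}\bigl(x + x^{-1}\bigr)$, which I would extract from the GRJ error analyses in~\cite{faust2023rational,brown2024device}. Since the divergence coefficients $\eta_j\tr[P_jQ_k]$ are nonnegative, monotonicity of $D_f$ in $f$ promotes this pointwise bound to $D_{\log - r_m^1} \le \frac{1}{m^2\ln 2}D_{x + x^{-1}}$. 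A direct computation from Definition~\ref{def:f_div} gives $D_x(\rho\|\sigma) = \tr[\rho^0\sigma] = Q_0$ and $D_{1/x}(\rho\|\sigma) = \tr[\rho^2\sigma^{-1}] = Q_2$, hence $D_{x+1/x} = Q_0 + Q_2$, and the quadrature term is bounded by $(Q_0 + Q_2)/(m^2\ln 2\cdot D) = C/m^2$.

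Finally I would assemble the two pieces. Since $D_{-r_m}/D = 1 + (D_{-r_m}-D)/D \le 1 + C/m^2$, the variational term is at most $\varepsilon_V(1 + C/m^2)$; adding the quadrature term $C/m^2$ gives $\varepsilon_V + \frac{C}{m^2}(1+\varepsilon_V)$, exactly the claimed bound. The main obstacle is the pointwise quadrature error estimate with the sharp envelope $x + x^{-1}$: everything else is bookkeeping with the linearity and monotonicity of $D_f$ and the sign of $D_{f_t}$, but it is precisely the envelope $x+x^{-1}$ (rather than a looser majorant) that makes $D_{x+x^{-1}}=Q_0+Q_2$ and hence reproduces the stated constant $C$, so I expect the technical effort to concentrate there.
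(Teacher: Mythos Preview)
Your proposal is correct and follows essentially the same route as the paper: split into quadrature and variational parts, invoke the pointwise GRJ envelope from~\cite{fawzi2022semidefinite} (the paper uses the slightly sharper $x+x^{-1}-2$ but then drops the $-2$ anyway, so both arrive at $Q_0+Q_2$), and reassemble via $D_{-r_m}/D \le 1 + C/m^2$. Your explicit verification that $D_{f_{t_j}}\le 0$ (needed to turn $|D_{f_{t_j}}|$ into $-D_{f_{t_j}}$ in the variational step) is in fact more careful than the paper, which handles this sign silently.
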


\begin{proof}
We start by analyzing the difference between the quadrature approximation $r_m(x)$ and the exact function $\log x$. According to ~\cite[Proposition 2.2]{fawzi2022semidefinite}, we have the following inequality,
\begin{align}
    g(x) \leq -\log x \leq -r_m(x),
    \label{eq:bound_logx}
\end{align}
where $g(x) = -r_m(x) - \frac{1}{m^2\ln2}\left(\frac{1}{x}+x - 2\right)$. Note that $D_f(\rho\|\sigma)$ is linear and monotone in $f$ for fixed positive semidefinite operator $\rho$ and $\sigma$. This implies that $D_f(\rho\|\sigma) \le D_g(\rho\|\sigma)$ if $f(x)\le g(x)$ for all $x\in (0,+\infty)$. Due to Eq.~\eqref{eq:bound_logx}, we have $D_{g}(\rho\|\sigma) \leq D(\rho\|\sigma) \leq D_{-r_m}(\rho\|\sigma)$,
which indicates that the difference between $D(\rho\|\sigma)$ and $D_{-r_m}(\rho\|\sigma)$ must be smaller than the difference between $D_{-r_m}(\rho\|\sigma)$ and $D_g(\rho\|\sigma)$. Set $f_0(x) = x-1$ and $f_1(x) = (x-1)/x$. We get the following estimation
\begin{align}
    D_{-r_m}(\rho\|\sigma) - D(\rho\|\sigma) & \le  D_{-r_m}(\rho\|\sigma) - D_g(\rho\|\sigma) \\
    & = \frac{1}{m^2\ln2}D_{f_0 - f_1}(\rho\|\sigma) \\
    & = \frac{Q_0(\rho\|\sigma) + Q_2(\rho\|\sigma) - 2}{m^2\ln2} \\
    & \leq \frac{Q_0(\rho\|\sigma) + Q_2(\rho\|\sigma)}{m^2\ln2} 
    \label{eq:error_1}
\end{align}
where $D_{f_0}(\rho\|\sigma) = Q_0(\rho\|\sigma) - 1$ and $D_{f_1}(\rho\|\sigma) = 1- Q_2(\rho\|\sigma)$.

The second part of the error comes from the variational algorithm for estimating the standard quantum $f$-divergence,
\begin{align}
    \left|D_{-r_m}(\rho\|\sigma) - \hat{D} (\rho\|\sigma)\right| & = \left|\sum_{j=1}^m \frac{w_j}{\ln2}\left(D_{f_{t_j}}(\rho\|\sigma) - \hat{D}_{f_{t_j}}(\rho\|\sigma) \right)\right|\\
    & \le  \sum_{j=1}^m \frac{w_j}{\ln2}\varepsilon_V D_{f_{t_j}}(\rho\|\sigma)\\
    & = \varepsilon_V D_{-r_m}(\rho\|\sigma)\\
    &  = \varepsilon_V \left(D_{-r_m}(\rho\|\sigma) - D(\rho\|\sigma) + D(\rho\|\sigma) \right) \\
    &  \le \varepsilon_V\left(\frac{Q_0(\rho\|\sigma) + Q_2(\rho\|\sigma)}{m^2\ln2} + D(\rho\|\sigma)\right),
\end{align}
where the first inequality follows from Eq.~\eqref{eq:epsilon_V_assum} and $w_j > 0$ for GRJ quadrature, the last inequality follows from Eq.~\eqref{eq:error_1}. Then the total relative error becomes
\begin{align}
    \left|\frac{\hat D(\rho\|\sigma) - D(\rho\|\sigma)}{D(\rho\|\sigma)}\right| &\le \frac{D_{-r_m}(\rho\|\sigma) - D(\rho\|\sigma) + \left|D_{-r_m}(\rho\|\sigma) - \hat{D} (\rho\|\sigma)\right|}{D(\rho\|\sigma)} \\
    &\le \frac{Q_0(\rho\|\sigma) + Q_2(\rho\|\sigma)}{m^2 D(\rho\|\sigma)\ln2}(1+\varepsilon_V) + \varepsilon_V,
\end{align}
which completes the proof.
\end{proof}

\subsection{Error bound for estimating Petz \Renyi divergence}
The above analysis similarly applies to the Petz \Renyi divergence. The error $\ve_Q$ arises from the difference between the quadrature approximation $h_m(x)$ and the exact function $h(x) = x^{1-\alpha}$. This difference can be bounded by using Theorem 1 and Theorem 9 from~\cite{faust2023rational}.

\begin{lemma}[\cite{faust2023rational}]
    Consider the function $h(x) = x^{1-\alpha}$, whose integral representation is $h(x) = 1+\int_0^1 f_t(x) \md \nu_\alpha(t)$,
    where $\alpha \in (0,1)\cup (1,2)$, $f_t(x)$ is defined in Eq.~\eqref{def:f_t(x)}, and $\md \nu_\alpha(t) = \frac{\sin(\alpha\pi)}{\pi} t^{\alpha -1 }(1-t)^{1-\alpha}\md t.$
    Let $h_m(x)$ be the GRJ quadrature approximations to $h(x)$ with the fixed node at $1$. Then for $m\ge 1$ and $x>0$, we have
    \begin{align}
        0\le \frac{h(x) - h_m(x) }{1-\alpha}\le \alpha\nu_{\alpha,m}\frac{(x-1)^2}{x},
        \label{eq:h(x)-h_m(x)}
    \end{align}
    where $\nu_{\alpha,m} = \max\left\{\nu_{\alpha,m}^0, \nu_{\alpha,m}^1 \right\}$ and
    \begin{align}
        \nu_{\alpha,m}^0 = O\left(\frac{\Gamma(\alpha)}{m^{2\alpha}\Gamma(2-\alpha)}\right), \qquad \nu_{\alpha,m}^1 = O\left(\frac{\Gamma(2-\alpha)}{m^{2(2-\alpha)}\Gamma(\alpha)}\right).
    \end{align}
    We note here that for $\alpha\in (0,1)$, $h(x) - h_m(x)\ge 0$; while for $\alpha\in (1,2)$, $h(x) - h_m(x)\le 0$. When $m\to \infty$, $h_m(x)$ converges to $h(x)$.
    \label{lemma:Err_bound_fawzi}
\end{lemma}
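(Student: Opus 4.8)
The plan is to read Lemma~\ref{lemma:Err_bound_fawzi} as a \emph{Gauss--Radau--Jacobi quadrature error estimate} and to follow the route of \cite{faust2023rational}. Starting from the L\"owner integral representation Eq.~\eqref{eq: power function integral}, I would first recognize the GRJ rule as the Gauss--Radau quadrature associated with the positive Jacobi weight $w(t) = t^{\alpha-1}(1-t)^{1-\alpha}$ on $[0,1]$, with one node fixed at the endpoint $t=1$. Writing $I[\phi] := \int_0^1 \phi(t)\,w(t)\,\md t$ and $R_m^1[\phi]$ for the rule, two standard facts drive the argument: the $m$-node rule is exact on polynomials of degree up to $2m-2$, and for the Cauchy-type integrand $f_t(x)$, whose pole $t = 1/(1-x)$ lies outside $[0,1]$ for every $x>0$, the quadrature error has a definite sign. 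That sign statement is precisely the bracketing Eq.~\eqref{eq:quadrature_Petz_a}, namely $R_m^1[f]\le I[f]\le R_m^0[f]$. Since $\operatorname{sign}(\sin\alpha\pi)=\operatorname{sign}(1-\alpha)$ on $(0,1)\cup(1,2)$, multiplying $I[f]-R_m^1[f]\ge 0$ by $\tfrac{\sin\alpha\pi}{\pi(1-\alpha)}>0$ immediately yields the left inequality $0\le \tfrac{h(x)-h_m(x)}{1-\alpha}$.

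For the quantitative upper bound I would isolate the factor $\tfrac{(x-1)^2}{x}$ \emph{before} estimating the quadrature error. Using exactness of the rule on the constant-in-$t$ function $f_0(x)=x-1$, the error of $f_t$ equals, up to sign, the error of $g(t) := f_0(x)-f_t(x) = \tfrac{t(x-1)^2}{t(x-1)+1}\ge 0$. The key algebraic observation is the factorization $g(t) = \tfrac{(x-1)^2}{x}\,\phi(t)$ with $\phi(t) = \tfrac{xt}{t(x-1)+1}$; since $t(x-1)+1 = xt+(1-t)\ge xt$ on $[0,1]$ we get $0\le \phi(t)\le 1$ uniformly in $x$. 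Note also $\tfrac{(x-1)^2}{x} = f_0(x)-f_1(x) = g(1)$, which is exactly the quantity on the right-hand side of Eq.~\eqref{eq:h(x)-h_m(x)}. Consequently
\begin{align}
    \frac{h(x)-h_m(x)}{1-\alpha} = \frac{\sin\alpha\pi}{\pi(1-\alpha)}\,\frac{(x-1)^2}{x}\,\bigl(R_m^1[\phi]-I[\phi]\bigr),
\end{align}
so the problem collapses to bounding the pure quadrature error $R_m^1[\phi]-I[\phi]\ge 0$ of a family of integrands that is \emph{uniformly bounded} by $1$, with all of the $x$-dependence now carried by the explicit prefactor.

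The main obstacle is this last step: extracting the decaying constants $\nu_{\alpha,m}^0$ and $\nu_{\alpha,m}^1$ with their sharp rates $m^{-2\alpha}$ and $m^{-2(2-\alpha)}$. A crude bound $R_m^1[\phi]-I[\phi]\le I[1]=\Gamma(\alpha)\Gamma(2-\alpha)$ is immediate but does not decay in $m$; and the classical mean-value remainder formula for Gauss--Radau quadrature carries a factor proportional to a high derivative of the integrand, hence a factor growing like $(x-1)^{2m}$, so it is not uniform in $x$ and is the wrong tool here. Obtaining the stated rates instead requires sharp asymptotics of the Gauss--Radau--Jacobi nodes and weights near the singular endpoints of $w$: the exponent $t^{\alpha-1}$ at $t=0$ governs $\nu_{\alpha,m}^0$ and yields the $m^{-2\alpha}\,\Gamma(\alpha)/\Gamma(2-\alpha)$ rate, while $(1-t)^{1-\alpha}$ at $t=1$ governs $\nu_{\alpha,m}^1$ and yields the $m^{-2(2-\alpha)}\,\Gamma(2-\alpha)/\Gamma(\alpha)$ rate; taking the worse of the two admissible fixed-node choices gives $\nu_{\alpha,m}=\max\{\nu_{\alpha,m}^0,\nu_{\alpha,m}^1\}$. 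I would establish these endpoint estimates by invoking \cite[Theorems~1 and~9]{faust2023rational} directly, rather than reproving the underlying Jacobi-polynomial asymptotics from scratch, and then combine them with the reduction above to close the proof.
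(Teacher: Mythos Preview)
The paper does not supply its own proof of Lemma~\ref{lemma:Err_bound_fawzi}: it is stated as a result imported from \cite{faust2023rational} (the lemma header cites that reference, and the surrounding text says ``This difference can be bounded by using Theorem~1 and Theorem~9 from~\cite{faust2023rational}''), with no proof environment following. So there is no in-paper argument to compare your sketch against.

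Your plan is a faithful outline of how such a bound is obtained. The sign step is correct: the bracketing $R_m^1[f_\cdot(x)]\le I[f_\cdot(x)]$ from Eq.~\eqref{eq:quadrature_Petz_a}, together with $\operatorname{sign}\!\big(\tfrac{\sin\alpha\pi}{1-\alpha}\big)>0$ on $(0,1)\cup(1,2)$, gives the left inequality. Your factorization is also clean: using exactness of the Radau rule on the constant-in-$t$ function $f_0(x)$, the error transfers to $g(t)=\tfrac{t(x-1)^2}{t(x-1)+1}=\tfrac{(x-1)^2}{x}\,\phi(t)$ with $0\le\phi\le 1$, which isolates the $(x-1)^2/x$ prefactor exactly as in Eq.~\eqref{eq:h(x)-h_m(x)} and reduces the problem to a quadrature error for an $x$-dependent but uniformly bounded integrand. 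You correctly flag that the decaying constants $\nu_{\alpha,m}^0,\nu_{\alpha,m}^1$ cannot come from the naive bound $R_m^1[\phi]-I[\phi]\le I[1]$ nor from the derivative-based Radau remainder (which blows up in $x$), and that one instead needs the sharp endpoint asymptotics of Gauss--Radau--Jacobi rules; and you plan to pull those rates from \cite[Theorems~1 and~9]{faust2023rational}. That is precisely the source the paper itself defers to, so your proposal is consistent with---indeed more detailed than---what the paper provides.
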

Since we obtain the estimate of $D_\alpha(\rho\|\sigma)$ via $D_\alpha(\rho\|\sigma)=\frac{1}{\alpha-1}\log Q_\alpha(\rho\|\sigma)$, we first provide the error bound for estimating $Q_\alpha(\rho\|\sigma)$ as follows.
\begin{lemma}[Error bound for estimating $Q_\alpha(\rho\|\sigma)$]
Let $\alpha\in (0,1)\cup (1,2)$. For any quantum states $\rho$ and $\sigma$ with $\supp(\rho) \subseteq \supp(\sigma)$, if the relative error of Algorithm~\ref{alg:algorithm_D_ft} is within $\varepsilon_V$ (as defined in Eq.~\eqref{eq:epsilon_V_assum})
, then the error of estimating $Q_\alpha(\rho\|\sigma)$ is bounded as follows: 
\begin{align}
\left|{Q_\alpha(\rho\|\sigma)} - {\hat{Q}_\alpha(\rho\|\sigma)}\right| \le &C_1 \nu_{\alpha,m}(1+\ve_V) + C_2 \varepsilon_V,
\end{align}
where $C_1 := 2\left|\alpha(1-\alpha)\right|\cdot (Q_2(\rho\|\sigma)-1)$, $C_2 := \left|Q_\alpha(\rho\|\sigma) - 1\right|$ and $\nu_{\alpha,m}$ defined in Lemma~\ref{lemma:Err_bound_fawzi}.
\label{lemma:Error_Q_alpha}
\end{lemma}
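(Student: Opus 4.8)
The plan is to mirror the two-part decomposition used in Proposition~\ref{pro:error_rel_entr}: write the total error as a \emph{quadrature} error $E_Q := |Q_\alpha(\rho\|\sigma) - D_{h_m}(\rho\|\sigma)|$, where $D_{h_m}$ is built from the \emph{exact} divergences $D_{f_{t_j}}$, plus a \emph{variational} error $E_V := |D_{h_m}(\rho\|\sigma) - \hat Q_\alpha(\rho\|\sigma)|$, and to bound each separately before combining by the triangle inequality. Throughout I would use that $D_f(\rho\|\sigma)$ is linear and monotone in $f$, together with the pointwise control on $h-h_m$ from Lemma~\ref{lemma:Err_bound_fawzi}.

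For the quadrature error, first note $Q_\alpha = D_h$ with $h(x)=x^{1-\alpha}$, so $E_Q = |D_{h-h_m}(\rho\|\sigma)|$. Lemma~\ref{lemma:Err_bound_fawzi} says $h-h_m$ is one-signed with $|h(x)-h_m(x)| \le |\alpha(1-\alpha)|\,\nu_{\alpha,m}\,\tfrac{(x-1)^2}{x}$. The key algebraic step is $\tfrac{(x-1)^2}{x} = f_0(x) - f_1(x)$ with $f_0(x)=x-1$ and $f_1(x)=(x-1)/x$, so linearity and monotonicity give $E_Q \le |\alpha(1-\alpha)|\,\nu_{\alpha,m}\,D_{f_0-f_1}(\rho\|\sigma)$. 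Using $D_{f_0}=Q_0-1$ and $D_{f_1}=1-Q_2$ (exactly as in Proposition~\ref{pro:error_rel_entr}) yields $D_{f_0-f_1}=Q_0+Q_2-2$, and then $Q_0 \le 1 \le Q_2$ — the first because $\supp(\rho)\subseteq\supp(\sigma)$ forces $Q_0=\tr[\rho^0\sigma]\le 1$, the second because nonnegativity of the Petz divergence $D_2$ forces $Q_2\ge 1$ — gives $Q_0+Q_2-2 \le 2(Q_2-1)$. Hence $E_Q \le C_1\nu_{\alpha,m}$.

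For the variational error, the constant $+1$ in $\hat Q_\alpha$ and in $D_{h_m}$ cancels, leaving $E_V = \tfrac{|\sin(\alpha\pi)|}{\pi}\bigl|\sum_j w_j(D_{f_{t_j}} - \hat D_{f_{t_j}})\bigr|$. Applying the relative-error hypothesis Eq.~\eqref{eq:epsilon_V_assum} termwise together with $w_j>0$ gives $E_V \le \ve_V\,\tfrac{|\sin(\alpha\pi)|}{\pi}\sum_j w_j |D_{f_{t_j}}|$. Here I would use that each $f_{t_j}$ (for $t_j\in(0,1]$) is operator concave with $f_{t_j}(1)=0$, whence $D_{f_{t_j}}(\rho\|\sigma)\le 0$; since these terms all share one sign, $\tfrac{|\sin(\alpha\pi)|}{\pi}\sum_j w_j|D_{f_{t_j}}| = |D_{h_m}-1|$ by the defining relation of $D_{h_m}$ in Eq.~\eqref{eq:petz_alpha_vari}. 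A further triangle inequality $|D_{h_m}-1| \le E_Q + |Q_\alpha-1| \le C_1\nu_{\alpha,m}+C_2$ then yields $E_V \le \ve_V(C_1\nu_{\alpha,m}+C_2)$.

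Combining the two bounds gives $|Q_\alpha-\hat Q_\alpha| \le E_Q + E_V \le C_1\nu_{\alpha,m}(1+\ve_V) + C_2\ve_V$, as claimed. I expect the main obstacle to be the sign bookkeeping: one must track how the signs of $1-\alpha$ and $\sin(\alpha\pi)$ flip between $\alpha\in(0,1)$ and $\alpha\in(1,2)$ and check that in both regimes $h-h_m$ and $\sum_j w_j D_{f_{t_j}}$ are one-signed, so that passing from $\sum_j w_j|D_{f_{t_j}}|$ to $|D_{h_m}-1|$ is legitimate. Establishing $D_{f_{t_j}}(\rho\|\sigma)\le 0$ via operator concavity of $f_{t_j}$ is the only ingredient needed beyond the results already available in the excerpt.
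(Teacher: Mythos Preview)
Your proposal is correct and follows essentially the same two-part decomposition as the paper: bound $E_Q$ via Lemma~\ref{lemma:Err_bound_fawzi} and the identity $D_{f_0-f_1}=Q_0+Q_2-2$, then bound $E_V$ by the relative-error hypothesis and a triangle inequality against $|D_{h_m}-1|$. Your treatment is in fact slightly more careful than the paper's on one point: the paper writes the intermediate step as $|\cdot|\le \frac{|\sin(\alpha\pi)|}{\pi}\sum_j w_j\ve_V D_{f_{t_j}}$ (without absolute values) and then jumps to $\ve_V|D_{h_m}-1|$, tacitly using that all $D_{f_{t_j}}$ share one sign; you make this explicit via operator concavity of $f_{t_j}$, which is the right justification. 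The only cosmetic difference is that the paper's quadrature bound is actually $\tfrac{1}{2}C_1\nu_{\alpha,m}$ (using $Q_0+Q_2-2\le Q_2-1$ directly), a factor of two tighter than the stated constant, whereas your weaker inequality $Q_0+Q_2-2\le 2(Q_2-1)$ lands exactly on $C_1$.
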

\begin{proof}
    The total error stems from two parts, i.e., $|Q_\alpha(\rho\|\sigma) - \hat{Q}_{\alpha}(\rho\|\sigma)|\le |Q_\alpha(\rho\|\sigma) - D_{h_m}(\rho\|\sigma)| + |D_{h_m}(\rho\|\sigma) - \hat{Q}_\alpha(\rho\|\sigma)|$. Since $Q_\alpha(\rho\|\sigma)$ generates from $x^{1-\alpha}$-divergence, the first part of the error stems from the difference between $x^{1-\alpha}$ and $h_m(x)$. Due to Eq.~\eqref{eq:h(x)-h_m(x)}, we have
    \begin{align}
        \left|D_{x^{1-\alpha}}(\rho\|\sigma) - D_{h_m}(\rho\|\sigma)\right| \le& \left|\alpha(1-\alpha)\right| \nu_{\alpha,m} D_{\frac{1}{x}+x-2}(\rho\|\sigma) \\
        \le & \left|\alpha(1-\alpha)\right| \nu_{\alpha,m}\cdot (Q_2(\rho\|\sigma)-1),
        \label{eq:err_trace_petz_1}
    \end{align}
    where the last inequality follows from $Q_0(\rho\|\sigma)\le 1$. Since the relative error of Algorithm~\ref{alg:algorithm_D_ft} is bounded by $\varepsilon_V$, we have
    \begin{align}
        \left|\hat{Q}_\alpha(\rho\|\sigma) - D_{h_m}(\rho\|\sigma)\right| =& \left| \frac{\sin(\alpha\pi)}{\pi}\sum_{j=1}^m w_j \left(\hat{D}_{f_{t_j}}(\rho\|\sigma) - D_{f_{t_j}}(\rho\|\sigma)\right)\right| \\
        \le & \left|\frac{\sin(\alpha\pi)}{\pi}\right|\sum_{j=1}^m w_j \varepsilon_V D_{f_{t_j}}(\rho\|\sigma) \nonumber\\
        = & \varepsilon_V \left|D_{h_m}(\rho\|\sigma) - 1\right|\label{eq: Qhat Dhm tmp} \\
        \le & \varepsilon_V \left(\left|Q_\alpha(\rho\|\sigma) - 1\right|+\left|\alpha(1-\alpha)\right|\nu_{\alpha,m} \left(Q_2(\rho\|\sigma) - 1\right)\right),
        \label{eq:err_trace_petz_2}
    \end{align}
    where the last inequality follows from Eq.~\eqref{eq:err_trace_petz_1}. Combining Eqs.~\eqref{eq:err_trace_petz_1} and~\eqref{eq:err_trace_petz_2}, we obtain the asserted result.
\end{proof}
This lemma implies that when $m\to \infty$, the error stems from quadrature approximation vanishes as $\nu_{\alpha,m}\to 0$. 
It also leads to the error analysis for $D_\alpha(\rho\|\sigma)$ as follows.

\begin{proposition}[Error bound for estimating $D_\alpha(\rho\|\sigma)$]
\label{pro:error_petz_alpha}
Let $\alpha\in (0,1)\cup (1,2)$. For any quantum states $\rho$ and $\sigma$ with $\supp(\rho) \subseteq \supp(\sigma)$, if the relative error of Algorithm~\ref{alg:algorithm_D_ft} is within $\varepsilon_V$ (as defined in Eq.~\eqref{eq:epsilon_V_assum}), 
then the total relative error of Algorithm~\ref{alg:algorithm_Petz_a} can be bounded by
\begin{align}
    \left|\frac{D_{\alpha}(\rho\|\sigma) - \hat{D}_{\alpha}(\rho\|\sigma)}{D_\alpha(\rho\|\sigma)}\right| \le O(\nu_{\alpha,m}) + O\left(\varepsilon_V\right),
\end{align}
where $\nu_{\alpha,m}$ is defined in Lemma~\ref{lemma:Err_bound_fawzi} and $m$ is the number of quadrature nodes.
\end{proposition}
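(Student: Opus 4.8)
The plan is to reduce the relative error of $D_\alpha$ directly to the additive error of $Q_\alpha$ that is already controlled by Lemma~\ref{lemma:Error_Q_alpha}, exploiting the fact that $D_\alpha$ differs from $\log Q_\alpha$ only by the multiplicative constant $1/(\alpha-1)$. Since $\hat D_\alpha = \frac{1}{\alpha-1}\log \hat Q_\alpha$ and $D_\alpha = \frac{1}{\alpha-1}\log Q_\alpha$, this constant cancels in the ratio (regardless of its sign), so that
\begin{align}
\left|\frac{D_\alpha(\rho\|\sigma) - \hat D_\alpha(\rho\|\sigma)}{D_\alpha(\rho\|\sigma)}\right| = \frac{\left|\log Q_\alpha(\rho\|\sigma) - \log \hat Q_\alpha(\rho\|\sigma)\right|}{\left|\log Q_\alpha(\rho\|\sigma)\right|}.
\end{align}
The entire task thus becomes converting the additive bound on $|Q_\alpha - \hat Q_\alpha|$ into an additive bound on $|\log Q_\alpha - \log \hat Q_\alpha|$. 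Note that the relative error is only meaningful when $\rho\neq\sigma$, i.e.\ $D_\alpha>0$, which forces $Q_\alpha(\rho\|\sigma)\neq 1$ and hence $\log Q_\alpha(\rho\|\sigma)\neq 0$.

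Next I would linearize the logarithm via the mean value theorem: there is a point $\xi$ lying between $Q_\alpha(\rho\|\sigma)$ and $\hat Q_\alpha(\rho\|\sigma)$ with $|\log Q_\alpha - \log \hat Q_\alpha| = |Q_\alpha - \hat Q_\alpha|/\xi$. Substituting the bound of Lemma~\ref{lemma:Error_Q_alpha} then yields
\begin{align}
\left|\frac{D_\alpha - \hat D_\alpha}{D_\alpha}\right| \le \frac{C_1\nu_{\alpha,m}(1+\ve_V) + C_2\ve_V}{\xi\,|\log Q_\alpha(\rho\|\sigma)|}.
\end{align}
Here $Q_\alpha(\rho\|\sigma)$, and hence $|\log Q_\alpha(\rho\|\sigma)|$, $C_1$, and $C_2$, are fixed positive quantities determined only by $\rho$, $\sigma$, and $\alpha$; these get absorbed into the $O(\cdot)$ notation. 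Expanding the numerator, the $C_1\nu_{\alpha,m}$ term contributes the claimed $O(\nu_{\alpha,m})$, the $C_2\ve_V$ term contributes $O(\ve_V)$, and the cross term $C_1\nu_{\alpha,m}\ve_V$ is of higher order and absorbed into $O(\nu_{\alpha,m})$, giving the asserted $O(\nu_{\alpha,m}) + O(\ve_V)$.

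I expect the main obstacle to be controlling the factor $1/\xi$, because $\log$ is not globally Lipschitz and its derivative blows up as the argument approaches $0$. To handle this I would use the bound of Lemma~\ref{lemma:Error_Q_alpha} once more to show that, for $m$ large enough and $\ve_V$ small enough, the estimate $\hat Q_\alpha(\rho\|\sigma)$ stays within a fixed neighborhood of the strictly positive constant $Q_\alpha(\rho\|\sigma)$, so that $\xi$ is bounded below by a positive constant independent of $m$ and $\ve_V$ and $1/\xi = O(1)$. Once this lower bound on $\xi$ is in place, the remaining steps are a mechanical substitution and collection of terms, completing the proof.
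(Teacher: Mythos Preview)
Your proposal is correct and takes a more direct route than the paper. The paper's proof inserts the intermediate quadrature value $D_{h_m}(\rho\|\sigma)$ and splits the absolute error as
\[
|D_\alpha - \hat D_\alpha| \le \left|\tfrac{1}{\alpha-1}\log\tfrac{Q_\alpha}{D_{h_m}}\right| + \left|\tfrac{1}{\alpha-1}\log\tfrac{D_{h_m}}{\hat Q_\alpha}\right|,
\]
bounds each logarithm by $|\log x| = O(|x-1|)$ using Eqs.~\eqref{eq:err_trace_petz_1} and~\eqref{eq: Qhat Dhm tmp} separately, and controls the resulting $1/D_{h_m}$ factor via the monotonicity of $D_{h_m}$ in $m$ from~\cite{faust2023rational} (so that $D_{h_m}\ge D_{h_2}$ for $\alpha\in(0,1)$ and $D_{h_m}\ge 1$ for $\alpha\in(1,2)$). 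You instead apply the combined bound of Lemma~\ref{lemma:Error_Q_alpha} in one step and linearize $\log$ with the mean value theorem, handling the $1/\xi$ factor by the asymptotic observation that $\hat Q_\alpha\to Q_\alpha>0$. Your argument is shorter and needs nothing beyond Lemma~\ref{lemma:Error_Q_alpha}; the paper's decomposition yields slightly more explicit constants (involving $D_{h_2}$) that are uniform for all $m\ge 2$, whereas your $1/\xi=O(1)$ implicitly requires $m$ large and $\ve_V$ small --- but since the final statement is asymptotic anyway, this is immaterial.
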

\begin{proof}
    The total error for estimating the Petz \Renyi divergence is bounded by two logarithmic terms, i.e.,
    \begin{align}
        \left|D_{\alpha}(\rho\|\sigma) - \hat{D}_{\alpha}(\rho\|\sigma)\right|&\le\left|\frac{1}{\alpha - 1}\log\frac{Q_\alpha(\rho\|\sigma)}{D_{h_m}(\rho\|\sigma)}\right|+\left|\frac{1}{\alpha-1}\log\frac{D_{h_m}(\rho\|\sigma)}{\hat{Q}_\alpha(\rho\|\sigma)}\right|.
    \end{align}
    For the ease of presentation, we bound them with the $O$ notation as follows,
    \begin{align}
        \left|\frac{1}{\alpha-1}\log\frac{Q_\alpha(\rho\|\sigma)}{D_{h_m}(\rho\|\sigma)}\right| =& O\left(\left| \frac{1}{\alpha-1}\cdot \frac{Q_\alpha(\rho\|\sigma) - D_{h_m}(\rho\|\sigma)}{Q_\alpha(\rho\|\sigma)} \right|\right) \\
        \le & O\left(\alpha\nu_{\alpha,m}\frac{Q_2(\rho\|\sigma)-1}{Q_\alpha(\rho\|\sigma)}\right),
        \label{eq:err_term1_Petz_alpha}
    \end{align}
    where the equality follows as $\log x \leq x-1$ and the inequality follows from Eq.~\eqref{eq:err_trace_petz_1}. The second term can be bounded by
    \begin{align}
        \left|\frac{1}{1-\alpha}\log\frac{D_{h_m}(\rho\|\sigma)}{\hat{Q}_\alpha(\rho\|\sigma)}\right| = & \left|\frac{1}{1-\alpha}\log\left(1 + \frac{\hat{Q}_\alpha(\rho\|\sigma) - D_{h_m}(\rho\|\sigma)}{D_{h_m}(\rho\|\sigma)}\right)\right| \\
        = & O\left(\left|\frac{\hat{Q}_\alpha(\rho\|\sigma) - D_{h_m}(\rho\|\sigma)}{(1-\alpha)D_{h_m}(\rho\|\sigma)}\right|\right) \\
        \le & O\left(\varepsilon_V\frac{D_{h_m}(\rho\|\sigma) - 1}{(\alpha-1)D_{h_m}(\rho\|\sigma)}\right) \\
        \le &O\left(\frac{\varepsilon_V}{|1-\alpha|}\max\left\{1,\frac{1}{D_{h_2}(\rho\|\sigma)}-1\right\} \right),
        \label{eq:err_term2_Petz_alpha}
    \end{align}
    where the first inequality follows from Eq.~\eqref{eq: Qhat Dhm tmp} and the second inequality follows from two properties of $D_{h_m}(\rho\|\sigma)$: (a) for $\alpha\in (1,2)$, $D_{h_m}(\rho\|\sigma) \ge 1$ and (b) for $\alpha\in(0,1)$ and $m\ge 2$, $D_{h_m}(\rho\|\sigma)\le D_{h_{m+1}}(\rho\|\sigma)\le 1$~\cite{faust2023rational}. Combining Eqs.~\eqref{eq:err_term1_Petz_alpha} and~\eqref{eq:err_term2_Petz_alpha}, the total relative error can be bounded by $|D_{\alpha}(\rho\|\sigma) - \hat{D}_{\alpha}(\rho\|\sigma)|/D_\alpha(\rho\|\sigma)\le O(\nu_{\alpha,m}) + O\left(\varepsilon_V\right)$.
\end{proof}

In the case of $\alpha = 2$, the error arises solely from the heuristic nature of the variational process, because the value of $Q_2(\rho\|\sigma)$ can be exactly determined from the value of $D_{f_1}(\rho\|\sigma)$ due to Eq.~\eqref{eq:approx_tr_petz2}. Therefore, if the relative error of Algorithm~\ref{alg:algorithm_D_ft} can be controlled within $\varepsilon_V$, the error bound for estimating $D_2(\rho\|\sigma)$ is given by:
\begin{align}
    \left| \frac{D_2(\rho\|\sigma) - \hat{D}_2(\rho\|\sigma)}{D_2(\rho\|\sigma)} \right| &= \frac{1}{D_2(\rho\|\sigma)}\left|\log \left(1-D_{f_1}(\rho\|\sigma)\right) - \log \left(1- \hat{D}_{f_1}(\rho\|\sigma)\right)\right| \\
    &= \frac{1}{D_2(\rho\|\sigma)}\left|\log \left(1 + \frac{D_{f_1}(\rho\|\sigma) - \hat{D}_{f_1}(\rho\|\sigma)}{\tr[\rho^2\sigma^{-1}]}\right)\right| \\
    &= O\left(\frac{\left|D_{f_1}(\rho\|\sigma) - \hat{D}_{f_1}(\rho\|\sigma)\right|}{Q_2(\rho\|\sigma)D_2(\rho\|\sigma)}\right) \\
    &\le O\left(\varepsilon_V\cdot \frac{Q_2(\rho\|\sigma)-1}{Q_2(\rho\|\sigma)D_2(\rho\|\sigma)}\right) = O\left(\varepsilon_V \right),
\end{align}
which is proportional to $\varepsilon_V$.

\section{Numerical simulations}
\label{sec:numerical_simu}

In this section, we present numerical simulations to validate the effectiveness of our variational algorithms via the PennyLane package~\cite{bergholm2018pennylane}. We begin by outlining our methodology and parameter settings, followed by a discussion of the simulation results.

\subsection{Numerical setup}

\noindent 
\textbf{Preparing input quantum states.} We consider scenarios where the input states are one-qubit and two-qubit mixed states. For this, we use the interface \textit{qml.QubitStateVector} from PennyLane to generate two pure quantum states $\ket{\rho}_{RS}$ and $\ket{\sigma}_{RS}$. These states effectively produce two mixed states on the marginal system $S$, that is, $\rho_S = \tr_R[\ket{\rho}\bra{\rho}]$ and $\sigma_S = \tr_R[\ket{\sigma}\bra{\sigma}]$. 

\vspace{0.3cm}
\noindent
\textbf{Parameterized quantum circuits.} Our ansatz is shown in Figure~\ref{fig:visual_ansatz}. For one-qubit cases, we set $U_{\bm{\theta}}$ and $V_{\bm{\beta}}$ as the generic single-qubit rotation gate $U_3$ in Figure~\ref{fig:visual_ansatz}(a). For two-qubit cases, we set $U_{\bm{\theta}}$ and $V_{\bm{\beta}}$ as the ansatz in Figure~\ref{fig:visual_ansatz}(b) with different parameters. We use $4$ layers of the two-qubit ansatz. 
To evaluate the gradient of the parameters in the parameterized quantum circuit, we utilize the function \textit{qml.gradients.param\_shift} in PennyLane. 

\begin{figure}[htbp]
    \centering
    \subfloat[Single-qubit]{\includegraphics[width=0.15\linewidth]{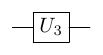}}\qquad\qquad \quad
    \subfloat[Two-qubit]{\includegraphics[width=0.28\linewidth]{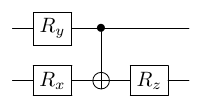}}
    \caption{Visualization of a single layer of our ansatz. (a) In single-qubit cases, the ansatz is a generic single-qubit rotation gate; (b) In two-qubit cases, the ansatz is constructed with three single-qubit rotation gates around each axis and a CNOT gate that generates entanglement.}
    \label{fig:visual_ansatz}
\end{figure}

\vspace{0.3cm}
\noindent
\textbf{Parameter setting.} We estimate the values in Eq.~\eqref{eq:sample_p} with sample means. The number of samples is set to be $10000$. The number of quadrature nodes $m$ is taken at $6$. The learning rate $\ell_r$ and the number of iterations $K$ depend on specific situations. The choices of these parameters are empirical. In particular, if $m$ is required to be large, a smaller learning rate $\ell_r$ and a larger number of iterations $K$ are required to ensure the precision of estimation. An adaptive learning rate strategy for large $m$ is shown in Appendix~\ref{sec: Training with adaptive learning rate}. The final loss is determined as the average of the loss values from the last $10$ iterations.

\subsection{Numerical results}

The numerical results of our simulations are shown in Table~\ref{tab:Num_results}. For one-qubit simulations, we set the iteration number $K$ as $300$; while for two-qubit simulations, we set $K$ as $200$. The loss of the first $100$ iterations of each case is shown in Figure~\ref{fig:conveg_Rel_qubit1}--\ref{fig:conveg_Petz_qubit2}. For all the cases, we set the learning rate $\ell_r$ as $0.1$. We choose $\alpha = 1.5$ as an example for Petz-\Renyi divergence. We also note that the Petz-$2$ \Renyi divergence $D_2(\rho\|\sigma)$ does not require a quadrature approximation, as it can be directly obtained from $D_{f_1}(\rho\|\sigma)$.

All exact and estimated values in the table are retained to four decimal places. It is easy to see that the quadrature approximations match the corresponding exact values to four decimal places and the relative errors of our estimated values are sufficiently small, most of which are within $1\%$. 

\setlength\extrarowheight{2pt}
\begin{table}[ht]
\centering
\small
\begin{tabular}{cc|c|c|c|c}
\toprule[2pt]
\multicolumn{2}{c|}{\textbf{Quantum Relative Entropies}}   & \makecell{\textbf{Exact Value}} & \textbf{Quadratures} & \makecell{\textbf{Estimated Value}} & \makecell{\textbf{Relative Error}} \\ \hline
\multicolumn{1}{c|}{\multirow{3}{*}{\makecell{$1$-qubit \\ cases}}} &  $D(\rho\|\sigma)$ &  $0.2470$  &        $0.2470$    &  $0.2455$ &  $0.59\%$ \\ \cline{2-6} 
\multicolumn{1}{c|}{}     & $D_{1.5}(\rho\|\sigma)$  & $0.3381$  &  $0.3381$     & $0.3314$ &  $2.01\%$     \\ \cline{2-6} 
\multicolumn{1}{c|}{}     & $D_2(\rho\|\sigma)$ &  $0.4073$  &   $\backslash$  & $0.4096$  &   $0.56\%$    \\ \hline
\multicolumn{1}{c|}{\multirow{3}{*}{\makecell{$2$-qubit \\ cases}}} & $D(\rho\|\sigma)$  &  $0.6585$  &        $0.6585$  & $0.6514$  &  $1.07\%$  \\ \cline{2-6} 
\multicolumn{1}{c|}{}  & $D_{1.5}(\rho\|\sigma)$ &   $0.9308$   &    $0.9308$  &  $0.9244$  & $0.70\%$   \\ \cline{2-6} 
\multicolumn{1}{c|}{}   &  $D_2(\rho\|\sigma)$ &  $1.1430$  &    $\backslash$    &  $1.1377$ & $0.46\%$        \\ \bottomrule[2pt]
\end{tabular}
\caption{A table of numerical results that compares exact values with their estimated values.
}
\label{tab:Num_results}
\end{table}

\begin{figure}[H]
    \centering
    \includegraphics[width=\textwidth]{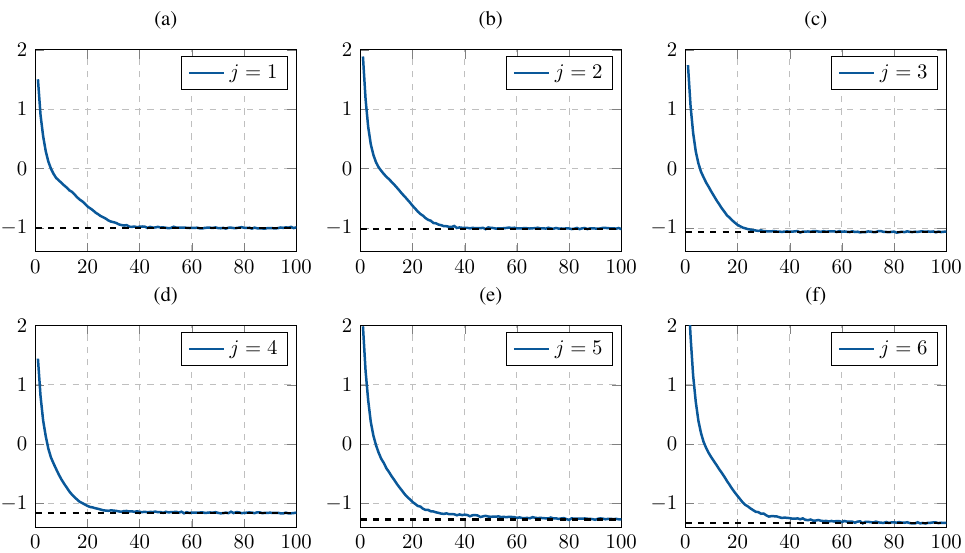}
    \caption{One-qubit case of estimating $D(\rho\|\sigma)$, where $j$ represents the index of the quadrature nodes $\{t_j\}_{j=1}^6$. The horizontal axis represents the number of iterations, and the vertical axis gives the loss value. The dashed line indicates the exact loss value, while the blue line is the loss value during training.}
    \label{fig:conveg_Rel_qubit1}
\end{figure}

\begin{figure}[H]
    \centering
    \includegraphics[width=\textwidth]{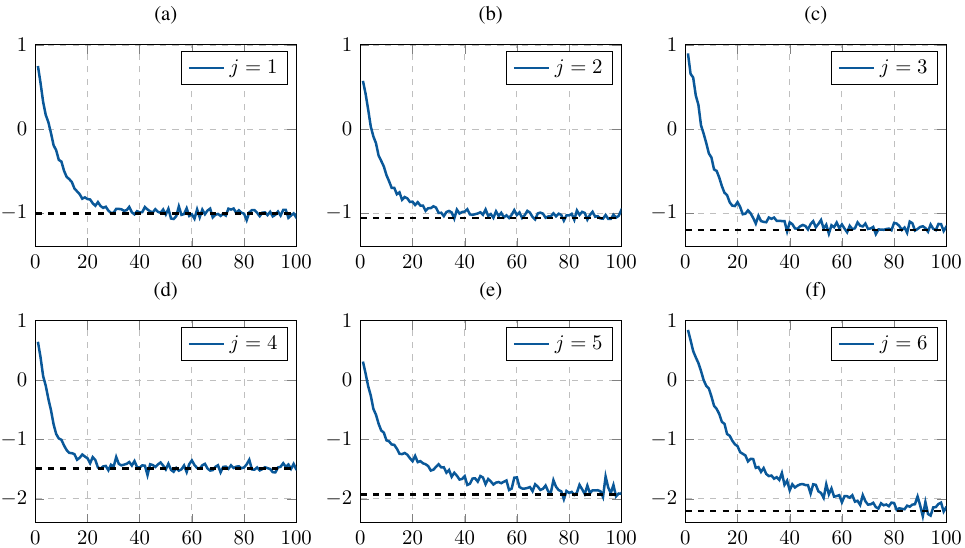}
    \caption{Two-qubit case of estimating $D(\rho\|\sigma)$, where $j$ represents the index of the quadrature nodes $\{t_j\}_{j=1}^6$. The horizontal axis represents the number of iterations, and the vertical axis gives the loss value. The dashed line indicates the exact loss value, while the blue line is the loss value during training.}
    \label{fig:conveg_Rel_qubit2}
\end{figure}

\begin{figure}[H]
    \centering
    \includegraphics[width=\textwidth]{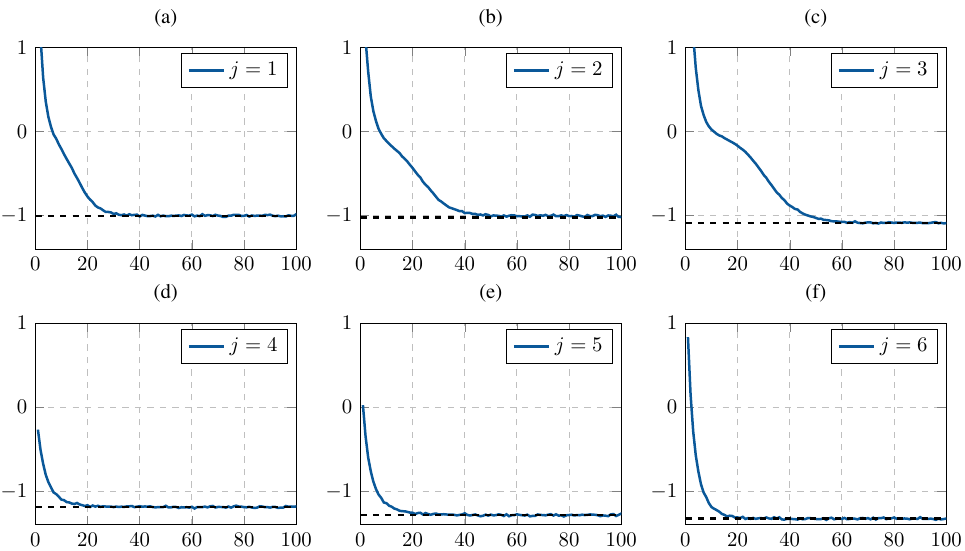}
    \caption{One-qubit case of estimating $D_\alpha(\rho\|\sigma)$ and $D_2(\rho\|\sigma)$, where $j$ represents the index of the quadrature nodes $\{t_j\}_{j=1}^6$. The horizontal axis represents the number of iterations, and the vertical axis gives the loss value. The dashed line indicates the exact loss value, while the blue line is the loss value during training. $D_2(\rho\|\sigma)$ is obtained with the final loss of $j=6$ as the quadrature node $t_6 = 1$.}
    \label{fig:conveg_Petz_qubit1}
\end{figure}

\begin{figure}[H]
    \centering
    \includegraphics[width=\textwidth]{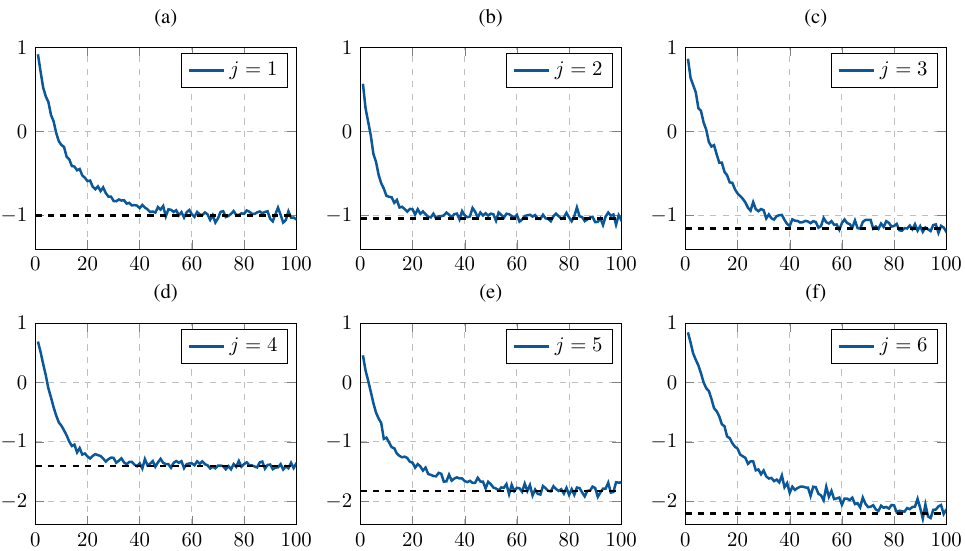}
    \caption{Two-qubit case of estimating $D_\alpha(\rho\|\sigma)$ and $D_2(\rho\|\sigma)$, where $j$ represents the index of the quadrature nodes $\{t_j\}_{j=1}^6$. The horizontal axis represents the number of iterations, and the vertical axis gives the loss value. The dashed line indicates the exact loss value, while the blue line is the loss value during training. $D_2(\rho\|\sigma)$ is obtained with the final loss of $j=6$ as the quadrature node $t_6 = 1$.}
    \label{fig:conveg_Petz_qubit2}
\end{figure}

\section{Application in quantum channel capacity}
\label{sec:applications}

As an application, we utilize our algorithm to explore the superadditivity of quantum channel capacity—a phenomenon where the combined use of quantum channels exceeds the sum of their individual capacities. Specifically, we propose a variational approach in Section~\ref{sec:superadditivity} to compute the quantum channel coherent information, employing our algorithm for quantum relative entropy as a subroutine. Numerical simulations in Section~\ref{sec:supperadd_numerical} uncover new examples of qubit channels exhibiting strict superadditivity of coherent information.

\subsection{Theoretical framework}
\label{sec:superadditivity}

Quantum channels exhibit several fascinating properties that distinguish them from their classical counterparts. One such property is the ability to transmit information superadditively, where multiple uses of the same channel can increase the total amount of information that can be reliably transmitted. Let $I(A\>B)_\sigma := D(\sigma_{AB}\|I_A \ox \tr_A(\sigma_{AB}))$ denote the coherent information of a quantum state $\sigma_{AB}$, where $I_A$ is the identity operator. The well-known quantum capacity theorem~\cite{devetak2005private, lloyd1997capacity, shor2002quantum} states that the quantum capacity $Q(\cN)$ of a quantum channel is given by the regularized channel coherent information:
\begin{align}\label{eq: channel capacity theorem}
    Q(\cN) = \lim_{k\to \infty}\frac{1}{k}I(\cN^{\ox k}),
\end{align}
where 
\begin{align}
    I(\cN) \equiv \sup_{\phi_{AA'}}I(A\>B)_{\cN_{A'\to B}(\phi_{AA'})}
\end{align}
is the channel coherent information, and the optimization is performed over all pure, bipartite quantum states $\phi_{AA'}$. Due to the superadditivity of channel coherent information, the limit in Eq.~\eqref{eq: channel capacity theorem} is generally necessary. 

An important question in quantum Shannon theory is to identify channels that exhibit strict superadditivity of coherent information, i.e., $I(\cN^{\ox n}) > nI(\cN)$. This phenomenon indicates that entangled channel coding or error correction strategies can outperform non-entangled ones, which has profound implications for quantum communication and quantum error correction. Existing studies have addressed this problem by classical neural networks and optimizations~\cite{leditzky2018dephrasure,bausch2020quantum,sidhardh2022exploring}. In this work, we take a different approach and propose a variational quantum algorithm to estimate the quantum channel coherent information directly on quantum computers.

\begin{figure}[ht]
    \centering
    \includegraphics[width=\linewidth]{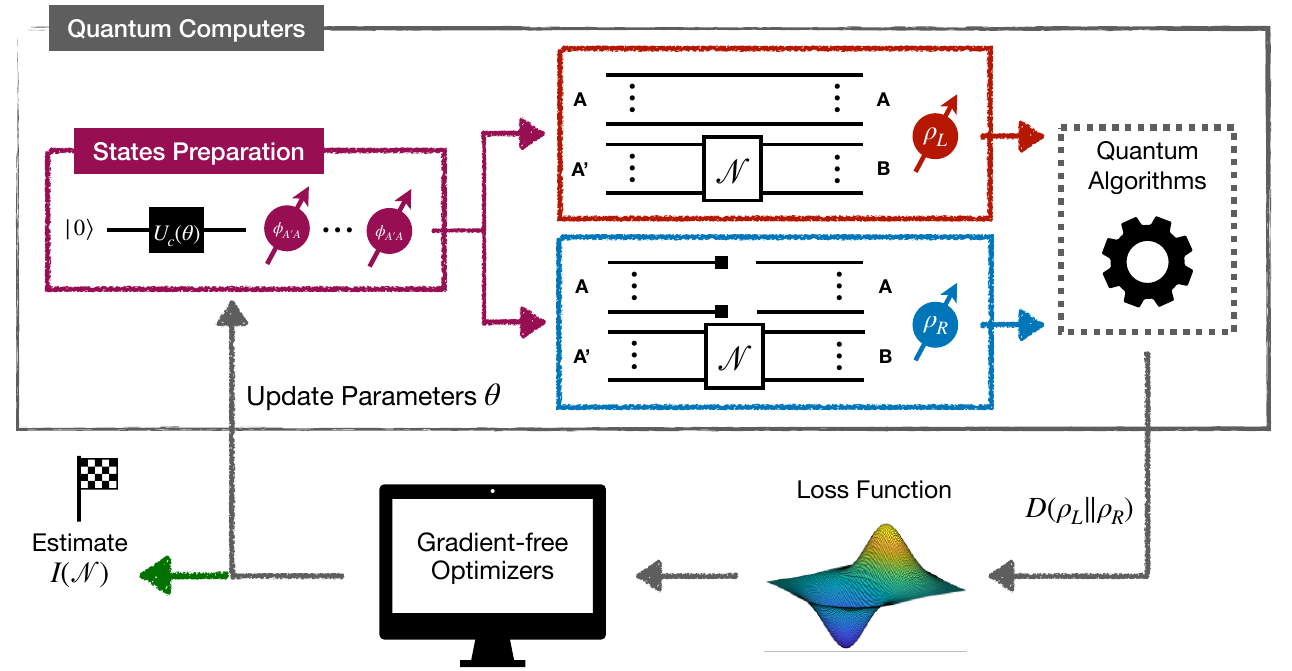}
    \caption{Estimating channel coherent information within the variational framework. Using a parameterized quantum circuit $U_c(\bm{\theta})$ and the target quantum channel $\cN$, we prepare the states $\rho_L$ and $\rho_R$ and estimate their quantum relative entropy using Algorithm~\ref{alg:algorithm_rel}. The circuit parameters are then updated using a gradient-free optimizer. Once the optimization process is complete, $D(\rho_L\|\rho_R)$ is returned as an estimate of $I(\cN)$.}
    \label{fig:superadd}
\end{figure}

First of all, the quantum channel coherent information can be expressed as:
\begin{align}
    I(\cN) &= \sup_{\phi_{AA'}} I(A\>B)_\sigma
    = \sup_{\phi_{AA'}} D(\sigma_{AB}\|I_A \ox \tr_A[\sigma_{AB}]),
\end{align}
where $\sigma_{AB} = \cN_{A'\to B}(\phi_{AA'})$. Let $\cR_A^I = \tr_A[\cdot]I_A$ represent the replacement channel on system $A$. Then, the channel coherent information can be rewritten as:
\begin{align}
    I(\cN) = \sup_{\phi_{AA'}} D(\cN_{A'\to B}(\phi_{AA'})\|\cR_A^I \ox \cN_{A'\to B}(\phi_{AA'})).
    \label{eq:opt_superadd}
\end{align}
Using the quantum algorithm for estimating quantum relative entropy, we can solve the supremum optimization problem in Eq.~\eqref{eq:opt_superadd} within a variational framework. As illustrated in Figure~\ref{fig:superadd}, the process begins with the input state $\ket{\bm{0}}$. A parameterized quantum circuit $U_c(\bm{\theta})$ is applied to prepare the state $\ket{\phi(\bm{\theta})} = U_c(\bm{\theta}) \ket{\bm{0}}$. The quantum channel $\cN$ is then applied to generate the states $\rho_L$ and $\rho_R$, defined as:
\begin{align}
    \rho_L = \cN_{A'\to B}(\phi(\bm{\theta})), \quad \rho_R = \cR_A^I \ox \cN_{A'\to B}(\phi(\bm{\theta})).
    \label{eq:rho_LR}
\end{align}
The states $\rho_L$ and $\rho_R$ are used as inputs to the quantum algorithm for estimating quantum relative entropy. The parameters $\bm{\theta}$ are updated iteratively using a gradient-free optimizer, such as a genetic algorithm. This is because the coherent information landscape is dominated by local maxima, making traditional optimization methods like gradient descent less effective~\cite{bausch2020quantum}. The genetic algorithm used here follows the approach in~\cite{sidhardh2022exploring}. Once the optimization process converges, the algorithm outputs $D(\rho_L\|\rho_R)$ as an estimate of the channel coherent information $I(\cN)$.

\subsection{Numerical experiments}
\label{sec:supperadd_numerical}

We conduct numerical experiments to explore superadditive quantum channels. 
For this, we identify a quantum channel with a readily computable one-shot coherent information $I(\cN)$ and estimate its $n$-shot coherent information $I(\cN^{\ox n})$ using the method outlined in Section~\ref{sec:superadditivity}.
According to Eq.~\eqref{eq:exact_lowerbound}, our estimate $\hat{I}(\cN^{\ox n})$ always serves as a lower bound for $I(\cN^{\ox n})$, i.e., $I(\cN^{\ox n}) \geq \hat{I}(\cN^{\ox n})$.
This implies that if our estimated value $\hat{I}(\cN^{\ox n})$ is strictly greater than $nI(\cN)$, then superadditivity must exist, as $I(\cN^{\ox n}) \geq \hat{I}(\cN^{\ox n}) > n I(\cN)$.

In particular, we choose the set of three-parameter family of Pauli channels, defined as
\begin{align}
    \cP(\rho) = p_0 \sigma_0 \rho \sigma_0 + p_1 \sigma_1 \rho \sigma_1 + p_2 \sigma_2 \rho \sigma_2 + p_3 \sigma_3 \rho \sigma_3, 
\end{align}
where $p_i \ge 0$, $\sum_{i=0}^3 p_i = 1$, and $\{\sigma_i\}_{i=0}^3$ is the set of Pauli matrices. It is well known that the coherent information of a Pauli channel is~\cite{bennett1996mixed,bausch2021error} 
\begin{align}
    I(\cP) = 1 + \sum_{i=0}^3 p_i \log p_i.
\end{align} 
Set $n = 2$ and the loss function as
\begin{align}
    L(\bm{p}) = \frac{1}{2}D(\rho_L\|\rho_R) - I(\cP),
\end{align}
where $\rho_L$ and $\rho_R$ are defined in Eq.~\eqref{eq:rho_LR}, and $\bm{p} = (p_0, p_1, p_2,p_3)$. 
The superadditivity is evident when we observe $L(\bm{p}) > 0$.

\vspace{0.3cm}
\noindent
\textbf{Experimental Setup.} We discretize the channel parameters $(p_1, p_2, p_3) \in [0, 0.2]^3$ with a step size of $0.01$, resulting in a total of 8000 Pauli channels. To construct the parameterized quantum circuit $U_c(\bm{\theta})$, we use a 2-layer complex entangled ansatz, as shown in Figure~\ref{fig:superadd_ansatz}, with 4 input qubits. The initial parameters $\bm{\theta}$ are randomly selected from the interval $[0, 2\pi]$. For optimization, we employ a simple genetic algorithm, available in the \textit{Global Optimization Toolbox} of MATLAB, as our gradient-free optimizer.

\begin{figure}[ht]
    \centering
    \includegraphics[width=0.25\linewidth]{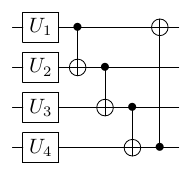}
    \caption{Complex entangled ansatz for $U_c(\bm{\theta})$. Each $U_i$ ($i\in\{1,2,3,4\}$) represents an arbitrary single-qubit rotation gate.}
    \label{fig:superadd_ansatz}
\end{figure}

\vspace{0.3cm}
\noindent
\textbf{Results.} Out of the $8000$ Pauli channels analyzed, our method identifies $69$ channels exhibiting strict superadditivity. The largest superadditivity gap is approximately $\frac{1}{2} D(\rho_L\|\rho_R) - I(\mathcal{P}) \approx 0.0078$, which matches the state-of-the-art gap found for Pauli channels~\cite{sidhardh2022exploring}. Figure~\ref{fig:3D_superadd} provides a visualization of these channels and their corresponding superadditivity values. This confirms the effectiveness of our method in identifying superadditive channels and showcases the potentials of quantum computers to address quantum-native problems.

\begin{figure}[ht]
    \centering
    \includegraphics[width=0.48\linewidth]{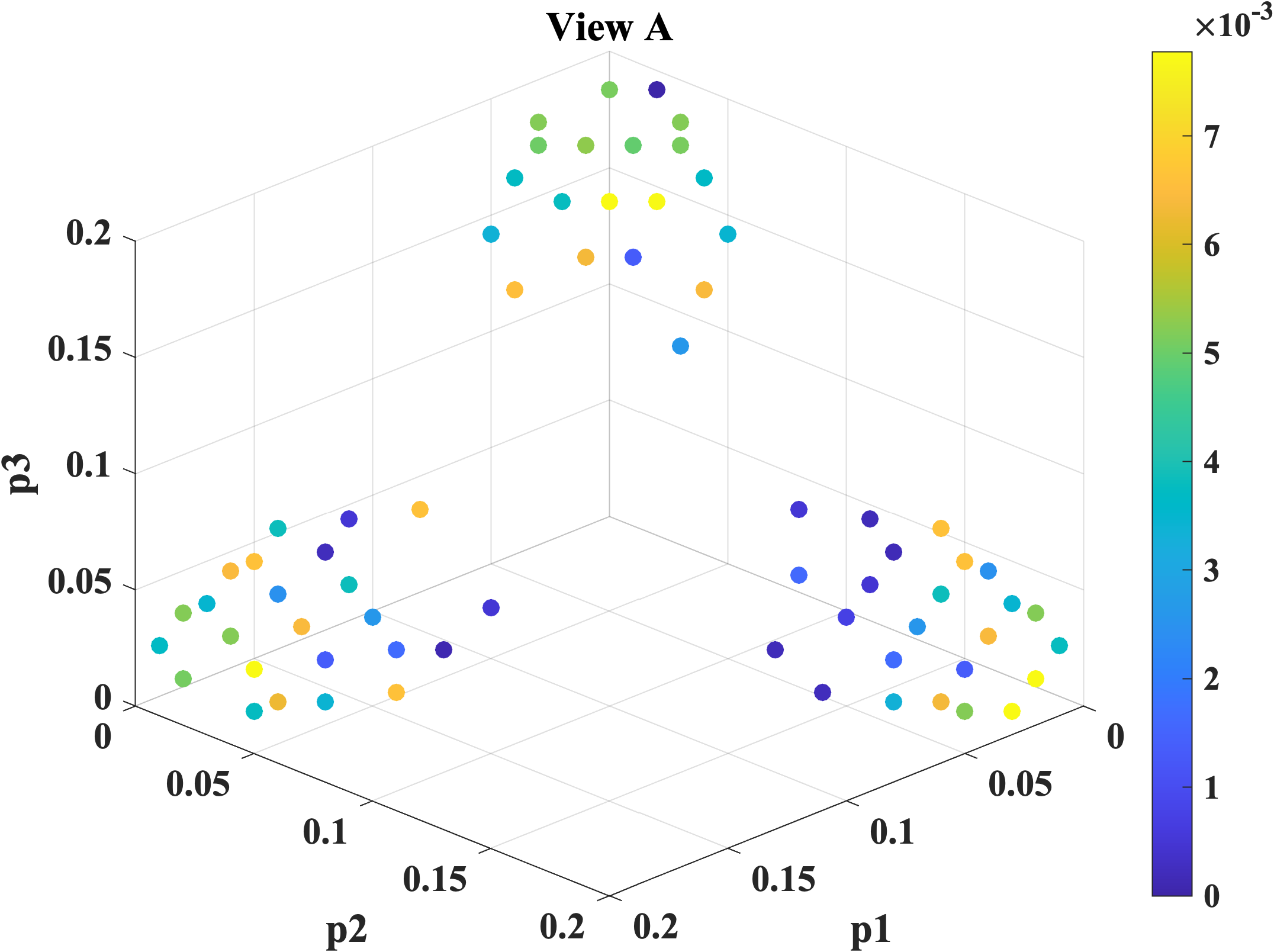}
    \includegraphics[width=0.48\linewidth]{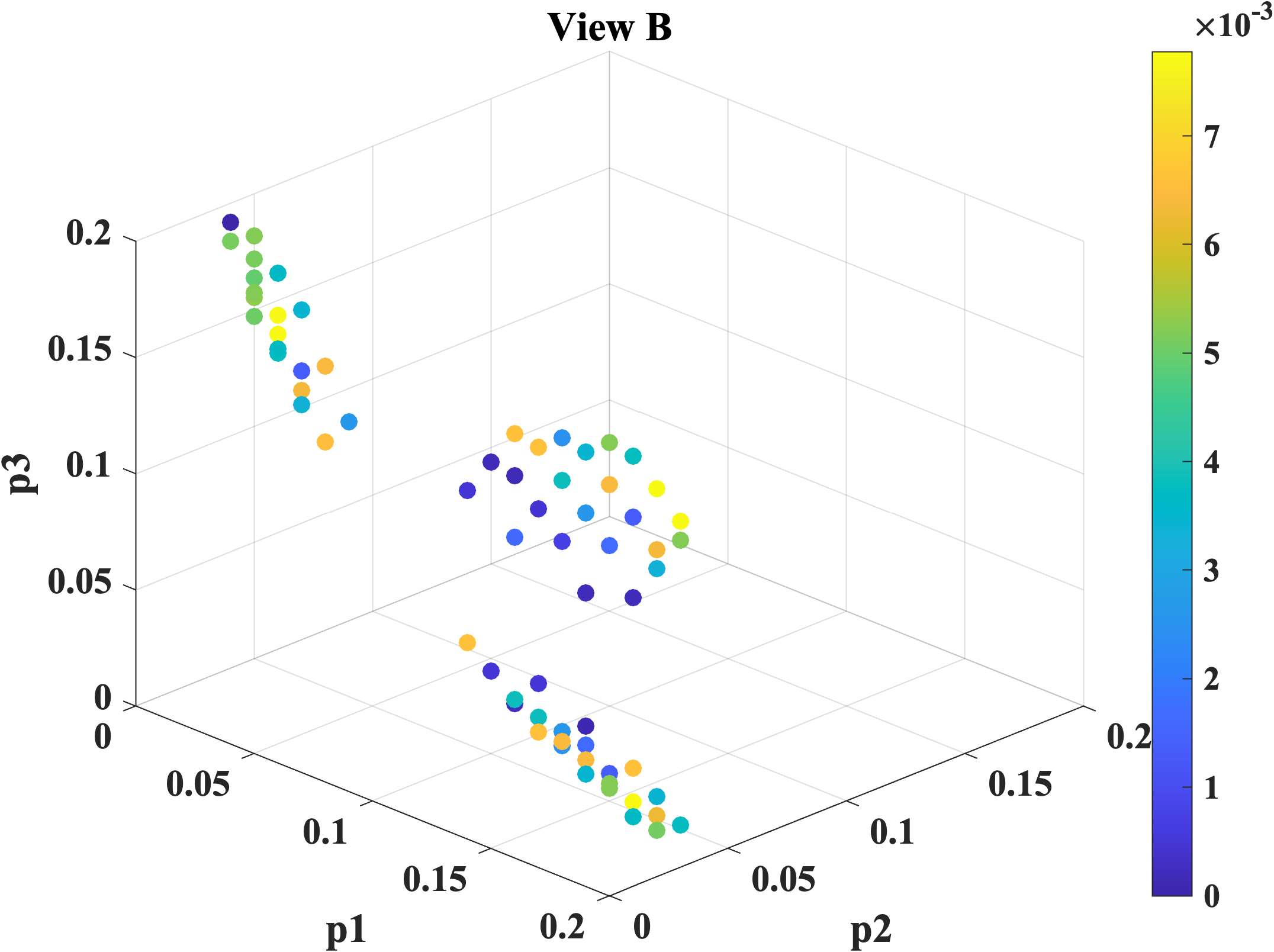}
    \caption{$3$D visualizations of Pauli channel parameters $(p_1, p_2, p_3)$, where each point represents a specific Pauli channel exhibiting superadditivity. A total of $69$ such channels are identified. The color of each point indicates the degree of superadditivity ($\frac{1}{2} D(\rho_L\|\rho_R) - I(\mathcal{P})$) as determined by our method. View A (left) shows the distribution from a front-upper perspective, while View B (right) provides a right-upper perspective.}
    \label{fig:3D_superadd}
\end{figure}

\section{Conclusions}
\label{sec:conclusion}
We have proposed a variational quantum algorithm for directly estimating quantum relative entropy and Petz \Renyi divergence between two unknown quantum states on quantum computers, addressing open problems highlighted by Goldfeld et al.~\cite{goldfeld2024quantum} and Wang et al.~\cite{wang2024new}. The feasibility and effectiveness of this approach have been validated through theoretical analysis and numerical simulations.

Efficiently computing information-theoretic quantities such as quantum relative entropy and Petz \Renyi divergence on quantum hardware represents a significant step toward realizing the full potential of quantum technologies for scientific discovery. Our demonstration of the application of this method to a well-known problem in quantum Shannon theory highlights the potential of quantum machine learning to tackle quantum-native problems and paving the way for advancing quantum science through quantum computation.

In terms of hardware implementation, a notable feature of our algorithm is its applicability to distributed quantum computing scenarios, where the quantum states to be compared are hosted on cross-platform quantum devices. This capability opens up a wide range of potential applications. Furthermore, the circuit size of our algorithm closely matches the size of the quantum states being compared, making it feasible for immediate implementation on existing quantum hardware. We leave the exploration of quantum hardware experiments as a direction for future work.

\section*{Acknowledgements}

Y.L. thanks Jinghao Ruan for providing computational support for the numerical experiments, and Zherui Chen for his helpful discussion. 
K.F. thanks Omar Fawzi for bringing his attention to the variational expression of the quantum relative entropy and He Zhang for his helpful discussions on the Gauss-Radau quadrature. 

\section*{Funding}

K.F. is supported by the National Natural Science Foundation of China (grant No. 92470113 and 12404569), the Shenzhen Science and Technology Program (grant No. JCYJ20240813113519025), the Shenzhen Fundamental Research Program (grant No. JCYJ20241202124023031), the 1+1+1 CUHK-CUHK(SZ)-GDST Joint Collaboration Fund (grant No. GRDP2025-022), and the University Development Fund (grant No. UDF01003565).

\section*{Author Contributions}

K.F. developed the main idea of this work. Y.L. derived the theoretical results and conducted the numerical simulations. Both authors contributed to the writing, discussions and interpretations of the results.

\section*{Competing Interests}

The authors declare no competing interests.

\section*{Additional Information}

After the release of our preprint~\cite{lu2025estimating}, a second version of~\cite{yao2024nonlinear} was updated, presenting a quantum phase processing approach for estimating functional quantities, including quantum relative entropy.

\bibliographystyle{ieeetr}
\bibliography{Bib}

\appendix

\section{Alternative approach to parameterizing linear operator}

\label{sec:CNN}

When parameterizing $\Lambda = \sum_{i=1}^d \lambda_i \ket{i}\bra{i}$, the set of parameters $\{\lambda_i\}_{i=1}^d$ can be approximated by a classical neural network as used in~\cite{goldfeld2024quantum}. More explicitly, let $y_{\bm{w}}: [d] \to \mathbb{R}$  be a classical neural network with a parameter vector $\bm{w}\in \mathbb{R}^s$. Then we parameterize $\Lambda = \sum_{i=1}^d y_{\bm{w}}(i) \cdot \ket{i}\bra{i}$ and the loss function becomes
\begin{align}
\cL(\bm{w}, \bm{\theta},\bm{\beta}) = \sum_{i=1}^d \left\{ t y_{\bm{w}}(i)^2 p_{\bm{\theta}}^{(i)} + (1-t)y_{\bm{w}}(i)^2 p_{\bm{\beta}}^{(i)} \right\} +(4 p_\chi -2)\tr[\Lambda].\nonumber
\end{align}
The gradients with respect to $w_j$ are
\begin{align}
\frac{\partial \cL}{\partial w_j} = \sum_{i=1}^d \left(2t y_{\bm{w}}(i) p_{\bm{\theta}}^{(i)} + 2(1-t)y_{\bm{w}}(i) p_{\bm{\beta}}^{(i)} - 2\right)\frac{\partial y_{\bm{w}}(i)}{\partial w_j} + 4\frac{\partial (p_\chi \tr[\Lambda])}{\partial w_j} ,
\end{align}
where the gradient in the last term is estimated by central difference, i.e.,
\begin{align}
\frac{\partial (p_\chi \tr[\Lambda])}{\partial w_j} \approx \frac{p_\chi(w_j + \Delta w_j)\tr[\Lambda(w_j + \Delta w_j)] - p_\chi(w_j -\Delta w_j)\tr[\Lambda(w_j - \Delta w_j)]}{2\Delta w_j}.
\end{align}
Thus, the number of queries for the extended SWAP test can be reduced from $O(d)$ to $O(s)$. Moreover, if we can prepare the diagonal quantum state $\rho_\Lambda = \Lambda/\tr[\Lambda]$ efficiently, we can estimate the following quantity with a single query of extended SWAP test,
\begin{align}
\tr[\rho(\cZ_{\bm{i}} + \cZ_{\bm{i}}^\dagger)] = \left( 4\tr[\left(\ket{0}\bra{0}\ox I\right) \chi_{\bm{i}} \left(\ket{0}\bra{0} \ox \rho\ox \rho_\Lambda \right)\chi_{\bm{i}}^\dagger] -2 \right)\cdot \tr[\Lambda],
\end{align}
which directly follows by Eq.~\eqref{eq:remark2_need1}.

\section{Alternative error analysis}
\label{appendix:error_analysis}

In this section, we provide the error analysis of our algorithm using GRJ quadrature nodes with fixed node $t_1 = 0$. Denote $\varepsilon_Q$ and $\varepsilon_V$ as the errors originating from the quadrature approximation and the heuristic nature of the variational algorithms, and $m$ as the number of quadrature nodes. Since the error originating from the heuristic nature of the variational process is not influenced by the formula of quadrature approximation, we only need to restate the error analysis for the quadrature approximation part. 

We first restate the proof of Proposition~\ref{pro:error_rel_entr} as follows.
\begin{proof}(Error bound for estimating $D(\rho\|\sigma)$).
    Let $r^0_m(x)$ and $r^1_m(x)$ be the quadrature approximations of function $\log x$ with fixed nodes at $t_1 = 0 $ and $t_m = 1$, respectively. Notice that $-r^1_m(x) = r^0_m(x^{-1})$, which leads to an inequality similar to Eq.~\eqref{eq:bound_logx}, i.e.,
    \begin{align}
        g^0(x) \le \log x\le r_m^0(x),
    \end{align}
    where $g^0(x) = r^0_m(x) - \frac{1}{m^2\ln 2}\left(\frac{1}{x} + x - 2\right)$. Due to the linearity and monotonicity of $f$-divergence, we have $D_{-g^0}(\rho\|\sigma) \ge D(\rho\|\sigma) \ge D_{-r^0_m}(\rho\|\sigma)$, which indicates that the difference between $D(\rho\|\sigma)$ and $D_{-r^0_m}(\rho\|\sigma)$ must be smaller than the difference between $D_{-r^0_m}(\rho\|\sigma)$ and $D_{-g^0}(\rho\|\sigma)$. Then we reach the same result as Eq.~\eqref{eq:error_1},
    \begin{align}
        D(\rho\|\sigma) - D_{-r^0_m}(\rho\|\sigma) &\le D_{-g^0}(\rho\|\sigma) - D_{-r^0_m}(\rho\|\sigma) \\
        &\le \frac{Q_0(\rho\|\sigma) + Q_2(\rho\|\sigma)}{m^2\ln2}.
    \end{align}
    The second part of the error that originates from the variational process remains the same, and thus the final error bound remains the same, which completes the proof.
\end{proof}

Then we restate the proof of Lemma~\ref{lemma:Error_Q_alpha} as follows.
\begin{proof}(Error bound for estimating $Q_\alpha(\rho\|\sigma)$).
    Let $h^0_m(x)$ and $h^1_m(x)$ be the quadrature approximations of function $x^{1-\alpha}$ with fixed nodes at $t_1 = 0 $ and $t_m = 1$, respectively. Based on Theorems 1 and 9 from \cite{faust2023rational}, the error of the GRJ quadrature approximation with the fixed node at $0$ is 
    \begin{align}
        0 \le \frac{h_m^0(x) - h(x)}{1-\alpha} \le \alpha \nu_{\alpha,m} \frac{(x-1)^2}{x},
        \label{eq:h_m^0(x)-h(x)}
    \end{align}
    where $\nu_{\alpha,m} = \max\left\{\nu_{\alpha,m}^0, \nu_{\alpha,m}^1 \right\}$, $\nu_{\alpha,m}^0$ and $\nu_{\alpha,m}^1$ are defined in Lemma~\ref{lemma:Err_bound_fawzi}. The total error can be bounded by $|Q_\alpha(\rho\|\sigma) - \hat{Q}_{\alpha}(\rho\|\sigma)|\le |Q_\alpha(\rho\|\sigma) - D_{h_m^0}(\rho\|\sigma)| + |D_{h_m^0}(\rho\|\sigma) - \hat{Q}_\alpha(\rho\|\sigma)|$, where the second term can still be bounded by Eq.~\eqref{eq:err_trace_petz_2}. Based on Eq.~\eqref{eq:h_m^0(x)-h(x)}, the first term can be bounded in the same way as Eq.~\eqref{eq:err_trace_petz_1}, i.e., $\left|D_{x^{1-\alpha}}(\rho\|\sigma) - D_{h_m^0}(\rho\|\sigma)\right| \le \left|\alpha(1-\alpha)\right| \nu_{\alpha,m}\cdot (Q_2(\rho\|\sigma) - 1)$, which completes the proof.
\end{proof}

Then we restate the proof of Proposition~\ref{pro:error_petz_alpha} as follows.
\begin{proof}
    (Error bound for estimating $D_\alpha(\rho\|\sigma)$). It's easy to see that when substituting $h_m(x)$ into $h_m^0(x)$, Eq.~\eqref{eq:err_term1_Petz_alpha} still holds. However, Eq.~\eqref{eq:err_term2_Petz_alpha} need to be modified as follows,
    \begin{align}
        \left|\frac{1}{1-\alpha}\log\frac{D_{h_m^0}(\rho\|\sigma)}{\hat{Q}_\alpha(\rho\|\sigma)}\right| & \le O\left(\varepsilon_V\frac{D_{h_m^0}(\rho\|\sigma) - 1}{(\alpha-1)D_{h_m^0}(\rho\|\sigma)}\right) \\
        & \le O\left(\frac{\varepsilon_V}{|1-\alpha|}\max\left\{1,\frac{1}{Q_{\alpha}(\rho\|\sigma)}-1\right\} \right),
        \label{eq:err_term3_Petz_alpha}
    \end{align}
    where the second inequality follows from two properties of $D_{h_m}(\rho\|\sigma)$: (a) for $\alpha\in (1,2)$, $D_{h_m}(\rho\|\sigma) \ge 1$ and (b) for $\alpha\in(0,1)$ and $Q_\alpha(\rho\|\sigma)\le  D_{h_m^0}(\rho\|\sigma)\le 1$. Combining Eqs.~\eqref{eq:err_term1_Petz_alpha} and~\eqref{eq:err_term3_Petz_alpha}, the total relative error can be bounded by $|D_{\alpha}(\rho\|\sigma) - \hat{D}_{\alpha}(\rho\|\sigma)|/D_\alpha(\rho\|\sigma)\le O(\nu_{\alpha,m}) + O\left(\varepsilon_V\right)$, which completes the proof.
\end{proof}

\section{Training with adaptive learning rate}
\label{sec: Training with adaptive learning rate}

In certain cases, a large number of quadrature nodes $m$ may be required to achieve the desired precision in the quadrature approximation. For illustration, we consider a particular $1$-qubit example that evalutes the Petz \Renyi divergence between $\rho = \mathrm{diag}(0.025, 0.975)$ and $\sigma = \mathrm{diag}(0.975, 0.025)$ with $\alpha = 1.5$. The exact value is $D_{\alpha}(\rho\|\sigma) = 5.2142$, with $\tr[\rho^{\alpha}\sigma^{1-\alpha}] = 6.0929$. Using $m = 6 $ quadrature nodes yields $D_{h_6} = 6.3508$ with a relative error of $4.23\%$. Increasing the quadrature nodes to $m = 16$ significantly improves the result to $D_{h_{16}} = 6.0932$, with a much smaller relative error of $4.92\times 10^{-3} \%$.

However, when estimating these quantities using our variational algorithms, we observe that as the quadrature node $ t_j $ approaches $1$, the optimization of the loss function becomes increasingly challenging. With a constant learning rate of $ \ell_r = 0.1 $, the training process succeeds for the first $12$ quadrature nodes $ \{t_j\}_{j=1}^{12} $, but fails for the last four nodes, as shown in the left column of Figure~\ref{fig:conveg_No_adapt}. It is clear that the loss function initially converges but begins to exhibit severe fluctuations midway, eventually becoming trapped in a local minimum. This behavior suggests that an excessively large learning rate prevents the algorithm from reaching the optimal solution, necessitating the use of a smaller learning rate in such cases.

To address this, we adopt an adaptive learning rate strategy to ensure both training efficiency and accuracy. We initialize the learning rate $\ell_r = 0.1$ and dynamically adjust this rate based on fluctuations in the loss function. Specifically, we monitor the loss values over the preceding $20$ iterations and perform a quadratic regression to measure the fluctuations using the mean square error of the regression. If the error exceeds $2$, the learning rate is halved, $\ell_r \leftarrow \ell_r/2$. As shown in the right column of Figure~\ref{fig:conveg_No_adapt}, this adaptive strategy effectively helps the algorithm escape the local minimum, ultimately achieving global convergence and giving an estimated value $5.1979$ with a relative error of $0.31\%$.

These results demonstrate that our algorithm can achieve high-precision estimations even when a large number of quadrature nodes is required. Furthermore, the convergence process could potentially be accelerated by employing more advanced optimization techniques, such as Nesterov accelerated gradient descent (NAGD), or by designing parameterized quantum circuits based on ansatzes with enhanced properties.

\begin{figure}[H]
    \centering
    \includegraphics[width=\textwidth]{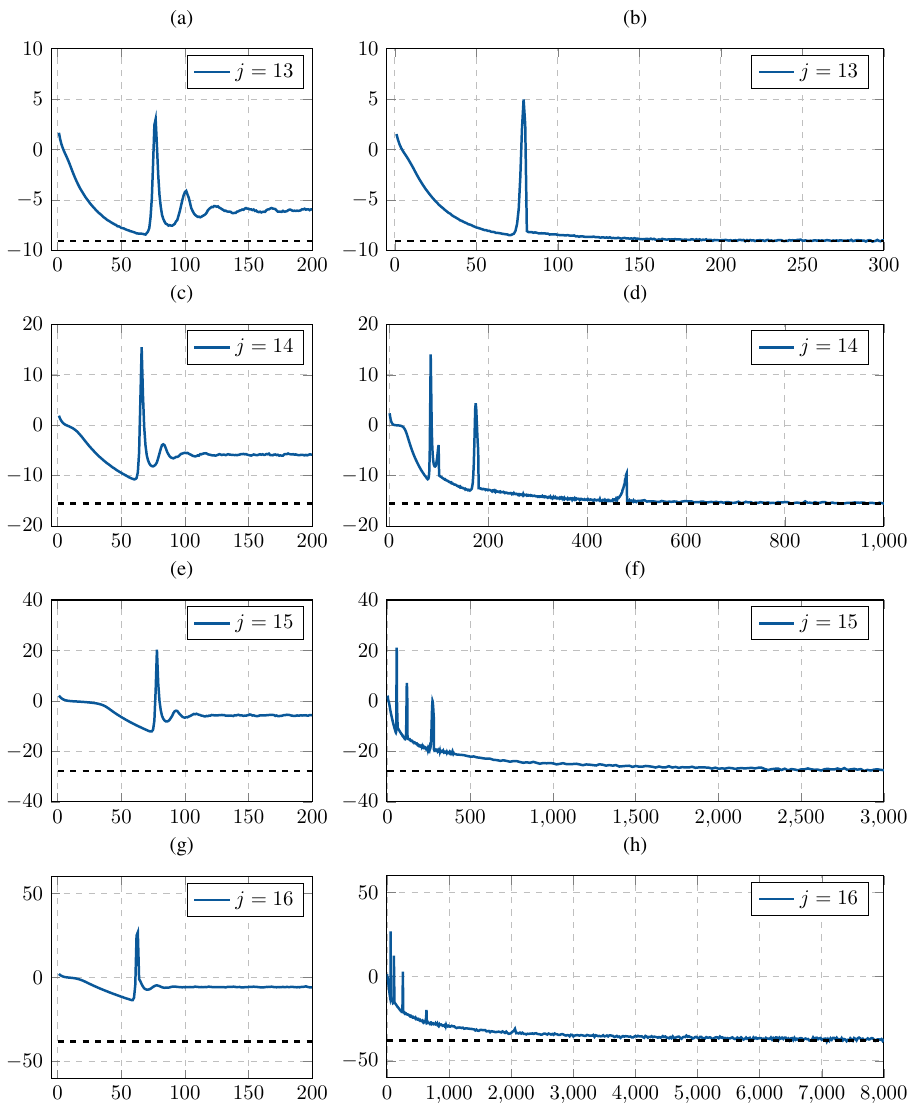}
    \caption{Training with and without adaptive learning rate. The dashed line represents the exact value of the loss, while the blue line represents the loss value during training. $j$ represents the index of the quadrature nodes. $x$-axis represents the number of iterations, while $y$-axis represents the loss value. The figures on the left show the results of the training with constant learning rate $\ell_r = 0.1$, while those on the right show the corresponding results with adaptive learning rate.}
    \label{fig:conveg_No_adapt}
\end{figure}

\end{document}